\documentclass[12pt,english]{article}

\usepackage[T1]{fontenc}
\usepackage[margin=1.25in]{geometry}
\usepackage[latin9]{inputenc}
\usepackage{color}
\usepackage[dvipsnames]{xcolor}
\usepackage{enumerate}
\usepackage{enumitem} \usepackage{hyperref}
\usepackage{booktabs} 

\definecolor{darkblue}{rgb}{0, 0, 0.5}
\hypersetup{
     colorlinks = true,
     citecolor = Maroon,
     urlcolor = darkblue,
     linkcolor = darkblue
}
  
\usepackage{amsmath}
\usepackage{amsthm}
\usepackage{amssymb}
\usepackage{cases}
\usepackage{bm}
\usepackage[authoryear,longnamesfirst]{natbib}

\usepackage{setspace}
\usepackage{graphicx}
\usepackage{float}
\usepackage[flushleft]{threeparttable}
\usepackage{pdflscape}
\usepackage{multicol}

\makeatletter
\theoremstyle{plain}
  \newtheorem{rem}{\protect\remarkname}

  \theoremstyle{plain}
  \newtheorem{assumption}{\protect\assumptionname}
  
\theoremstyle{plain}
\newtheorem{thm}{\protect\theoremname}
\newtheorem*{thm*}{\protect\theoremname}

  \theoremstyle{plain}
  \newtheorem{lem}{\protect\lemmaname}[section]
    
  \theoremstyle{remark}

\makeatother

\usepackage{babel}
  \providecommand{\assumptionname}{Assumption}
  \providecommand{\claimname}{Claim}
  \providecommand{\lemmaname}{Lemma}
  \providecommand{\remarkname}{Remark}
\providecommand{\theoremname}{Theorem}
  \providecommand{\examplename}{Example}
\providecommand{\corname}{Corollary}

\usepackage{graphicx}
\usepackage{subcaption}
\graphicspath{{figures/}}                  \DeclareGraphicsExtensions{.pdf,.png,.eps}         \title{Refined Cluster Robust Inference\thanks{We would like to thank A. Colin Cameron, Antoine A. Djogbenou, and participants of UC Davis Econometrics seminar and SETA2026 for helpful comments. Ura acknowledges funding from A. Colin Cameron Associate Professor Research Award.}}
\author{\setcounter{footnote}{1}
Bulat Gafarov\thanks{Department of Agricultural and Resource Economics, University of California, Davis. Email: \href{bgafarov@ucdavis.edu}{bgafarov@ucdavis.edu}}
\and 
Takuya Ura\thanks{Department of Economics, University of California, Davis. Email: \href{takura@ucdavis.edu}{takura@ucdavis.edu}}
}
\begin{document}
\maketitle
\begin{abstract}
\linespread{1.2}

It has become standard for empirical studies to conduct inference robust to cluster dependence and heterogeneity.  With a small number of clusters, the normal approximation for the $t$-statistics of regression coefficients may be poor. This paper tackles this problem using a critical value based on the conditional Cram\'er-Edgeworth expansion for the $t$-statistics. The proposed critical value guarantees third-order refinement, and it does not require resampling because it is a closed-form function of the estimated score skewness and kurtosis. Simulations show that our proposal can make a difference in size control with as few as 10 clusters.

\begin{description}
\item[Keywords:] Cluster robust inference, Cram\'er-Edgeworth expansion, Asymptotic refinement
\item[JEL Classification:]  C12, C21
\end{description}
\end{abstract}
\newpage
 \section{Introduction}

Cluster robust inference has become a standard practice in applied microeconometrics. Robustness to arbitrary dependence within a cluster often comes at the cost of a reduction in the effective sample size. As a result, despite a large overall sample size, the researchers have to account for the non-Gaussian distribution of the $t$-statistic for significance tests, as was the case in classical statistics \citep{student1908probable}. In particular, it is well-known that the standard normality-based inference method may lead to over-rejection when the   number of clusters  is small \citep{cameron2015practitioner,cameronMiller2025inference,mackinnon2023cluster}.

In this paper, we propose a new analytical correction to critical values based on inverting the Cram\'er-Edgeworth expansion \citep{Edgeworth1905,Cramer1928} of the $t$-statistic null distribution (i.e., using the expansion of \cite{cornish1938moments}). We  first derive the Cram\'er-Edgeworth term up to the second order and then adjust  the critical value based on the estimated Cram\'er-Edgeworth term.
In other words, we develop the approach of \cite{hall1983inverting} in the setup of linear regression with heterogeneous clusters.\footnote{\cite{hall1983inverting} considers independent and identically distributed data. Moreover, in this paper, we consider a two-sided alternative hypothesis and the second-order Cram\'er-Edgeworth expansion. As a result, we do not need to apply the Cram\'er-Edgeworth expansion of the estimated coefficient recursively, as in \cite{hall1983inverting}.} As special sub-cases, our approach also nests inference on the sample mean of non-identically distributed data and regression coefficients in i.i.d. cross-sectional regression. The resulting inference achieves third-order asymptotic refinement in the sense that the actual test size differs from the prespecified significance level $\alpha$ only by $o(G^{-1})$ with $G$ clusters.
Our simulation studies support this theoretical size accuracy. In our designs with $G\geq 10$, the proposed critical value has better size accuracy than the normal critical value. To the best of our knowledge, this is the first paper to use the estimated Cornish-Fisher expansion for inference with observations from non-identical distributions.

The benefits of the proposed method can be better understood when comparing it with the existing methods. Among them, the pairs cluster bootstrap might be a natural choice for small sample inference, but our proposed method has a few advantages.\footnote{The residual bootstrap cannot be applied for cluster robust inference when cluster sizes vary across clusters.} First and most importantly, existing simulation studies show that the pairs cluster bootstrap does not perform well with a small number of clusters. For example, \cite{cameron2008bootstrap} use a simulation design based on \cite{bertrand2004much} and explain that the poor performance of the pairs cluster bootstrap is due to the fact that the resampled values of the Gram matrix $\frac{1}{G}\sum_{g=1}^GX_{g}'X_{g}$ are nearly singular.\footnote{\citet[p.394]{djogbenou2019asymptotic} also state that the ``fundamental problem'' of the pairs cluster bootstrap is that the resampled data has different values of $X_1,\ldots,X_G$ than the original data.} Our approach of analytically inverting the Cram\'er-Edgeworth expansion avoids this problem by not resampling $\frac{1}{G}\sum_{g=1}^GX_{g}'X_{g}$ while achieving asymptotic refinement. Second, the pairs cluster bootstrap uses independence across $X_1,\ldots,X_G$, while our expansion does not require it. We allow $X_1,\ldots,X_G$ to be correlated, e.g., through adaptive randomization. Third, the standard proof for the pairs cluster bootstrap's asymptotic refinement \citep[e.g.,][]{liu1988bootstrap,hall1992bootstrap} excludes discrete regressors. To apply the results from \citet[Ch.5]{hall1992bootstrap} to the regression framework, we need to impose the Cram\'er condition on the regressors; however, the Cram\'er condition fails for discrete random variables \citep[p.207]{bhattacharya2010normal}.\footnote{We could not find a sufficient condition for the pairs cluster bootstrap's asymptotic refinement that allows for discrete regressors and non-identical distributions.
A weaker version of the Cram\'er condition has been proposed, e.g., in \cite{bai1991edgeworth}, but the pairs cluster bootstrap's asymptotic refinement without the classical Cram\'er condition is beyond the scope of this paper.}  Lastly, our proposed critical value has a closed-form expression, and for this reason, its computation is much faster than resampling. 

A number of other methods have been proposed to address the limitations of the pairs cluster bootstrap. For example, \cite{cameron2008bootstrap} propose the wild cluster bootstrap and demonstrate its good finite-sample properties in simulations. \cite{djogbenou2019asymptotic} provide the asymptotic size control of the wild cluster bootstrap, and \cite{canay2021wild} show the size control of the wild cluster bootstrap even with a fixed number of clusters as long as there are a large number of observations per cluster. However, Theorem 5.2 of \cite{djogbenou2019asymptotic} shows that the wild cluster bootstrap does not achieve similar asymptotic refinement when the score has non-zero skewness.\footnote{In this paper, we do not use the Cram\'er-Edgeworth expansion by  \cite{djogbenou2019asymptotic} directly. Remark \ref{rem:different_edgeworth} explains more details.} Our simulation results in Section \ref{sec:skewed_MONTECARLO_DGP} confirm this result of \cite{djogbenou2019asymptotic}, and our proposed method demonstrates better size control than the wild cluster bootstrap in such cases.

There are  papers that provide closed-form critical-value corrections by assuming normal errors  \citep{bell2002bias,bester2011inference,imbens2016robust,young2016improved,hansen2021exact,hansen2026jackknife}. Similarly to these, the critical value proposed in this paper has a closed-form expression and does not require simulations. However, our approach does not rely on normal errors or a specific covariance structure for the error terms within a cluster, unlike those papers.

We perform the analysis within the standard framework where asymptotic normality holds \citep[e.g.,][]{carter2017asymptotic,djogbenou2019asymptotic,hansen2019asymptotic}.
Our Theorem 1 shows that the proposed critical value is asymptotically valid as long as the $t$-statistic is asymptotically normal.
In this sense, our paper is complementary to the recent literature that investigates or relaxes the assumption of asymptotic normality.
For example, \cite{hansen2019asymptotic}  provide a necessary condition for the weak law of large numbers to hold with clustered data while allowing for arbitrary within dependent structures.\footnote{With the notations in Section \ref{sec:proposed_method}, their necessary condition (Assumption 1) is written as $\max_{g}N_g/(N_1+\ldots+N_G)=o(1)$.} Based on their result, asymptotic normality requires that no cluster can have a disproportionately large number of observations compared to the others.  In addition, \cite{kojevnikov2023some} show the impossibility of estimating the asymptotic variance in the presence of a single large cluster.
Finally, \cite{chiang2023genuinely} provide a necessary and sufficient condition for the asymptotic normality of the $t$-statistic. Intuitively, their condition is that the score for the $t$-statistic has a sufficiently thin tail.
In our Theorem 2, we show that for the scores with bounded higher-order moments, our proposed critical value achieves third-order refinement in addition to basic asymptotic size control.

The paper proceeds as follows. Section \ref{sec:proposed_method} formally introduces the regression model with clustered errors and the new critical value. Section \ref{sec:montecarlo} presents Monte Carlo simulations for the proposed critical value and the existing ones. Section \ref{sec:conclusion} concludes. The appendix collects all the proofs and additional results.
 \section[Inverting a Cram\'er--Edgeworth Expansion with Clustered Errors]{Inverting a Cram\'er--Edgeworth Expansion\\with Clustered Errors}\label{sec:proposed_method}

\subsection{The Setup}

We consider the dataset of $\{(Y_{ig},X_{ig}):i=1,\ldots,N_g,\ g=1,\ldots,G\}$ that follows the regression model $Y_{ig}=X_{ig}'\beta+u_{ig}$ with $E[u_{ig}\mid X_{1g},\ldots,X_{N_g g}]=0$ and $\dim(X_{ig})=k$. We assume the observations $\{Y_{ig}:i=1,\ldots,N_g\}$ are independent across $g=1,\ldots,G$ given $\{X_{ig}:i=1,\ldots,N_g\}$. To emphasize that we use independence across $g$, we use matrix notation with $Y_g=(Y_{1g},\ldots,Y_{N_g g})'$ and $X_g=(X_{1g}',\ldots,X_{N_g g}')'$, and write the regression model succinctly as
$$
Y_{g}=X_{g}\beta+u_{g}\mbox{ with }E[u_{g}\mid X_{g}]=0.
$$
For a fixed vector $\lambda\in \mathbb R^{k}$ and a hypothesized value $c_0$ for $\lambda'\beta$, we consider the hypothesis testing problem of
$$
H_0:\lambda'\beta=c_0\mbox{ vs }H_1: \lambda'\beta\ne c_0
$$
with the significance level $\alpha\in (0,1)$.
The OLS estimator for $\beta$ is
$$\hat{\beta}=\left(\sum_{g=1}^GX_g'X_g\right)^{-1}\left(\sum_{g=1}^GX_g' Y_g\right).$$
Throughout this paper, we consider an asymptotic variance estimator for $\lambda'\hat{\beta}$ defined by
$$\hat\sigma^2=\frac{1}{G}\sum_{g=1}^G(\lambda'\Pi X_g'\hat{u}_g)^2$$ with $\Pi=(\frac{1}{G}\sum_{g=1}^GX_{g}'X_{g})^{-1}$ and $\hat{u}_g=Y_g-X_g\hat\beta$.\footnote{Remark \ref{rem:another_var_estimation} discusses another asymptotic variance estimator.}
Define the $t$-statistic by $$t=\sqrt{G}\frac{\lambda'\hat{\beta}-c_0}{\hat\sigma}.$$
From now on, we estimate the null distribution for the above $t$-statistic and construct a critical value. We treat the covariates $\mathbf{X}=\{X_{g}\}^\infty_{g=1}$ as fixed, so we investigate $Pr(|t|\leq z\mid\mathbf{X}=\mathbf{x})$ for a given sequence of constants $\mathbf{x}=\{x_{g}\}^\infty_{g=1}$.

We consider the numerator and denominator of
$$
t=\sqrt{G}\frac{(\lambda'\hat{\beta}-c_0)/\sigma}{\hat\sigma/\sigma}
$$
under the null hypothesis $H_0$, where $\sigma$ is the population counterpart of $\hat\sigma$ defined by $\sigma^2=\frac{1}{G}\sum_{g=1}^G\sigma_g^2$ with $\sigma_{g}^2=E\left[(\lambda'\Pi X_g'u_g)^2\mid\mathbf{X}=\mathbf{x}\right]$. Under the null hypothesis, the numerator has the linear representation of
$$
\frac{\lambda'\hat{\beta}-c_0}{\sigma}=\frac{1}{G}\sum_{g=1}^G\omega_{1g}\mbox{ with } \omega_{1g}=\sigma^{-1}\lambda'\Pi X_g'u_g.
$$

The lemma below shows  the denominator $\hat\sigma/\sigma$ can be expressed as sample means of independent random variables.

\begin{lem}\label{lemma:variace_quadratic}
$$
\frac{\hat\sigma}{\sigma}=\sqrt{1-\left(\frac{1}{G}\sum_{g=1}^G\omega_{2g}\right)'\Gamma\left(\frac{1}{G}\sum_{g=1}^G\omega_{2g}\right)+\frac{1}{G}\sum_{g=1}^G\omega_{3g}}
$$
where
\begin{align*}
\Gamma
&=\left(
\begin{array}{cc}
-\frac{1}{G}\sum_{g=1}^GX_g'X_g\Pi'\lambda\lambda'\Pi X_g'X_g&I_k\\
I_k&O
\end{array}
\right),
\\
\omega_{2g}&=\sigma^{-1}\left(\begin{array}{c}I_k\\ X_g'X_g\Pi'\lambda\lambda'\end{array}\right) \Pi X_g'u_g\mbox{, and }
\\
\omega_{3g}&=\sigma^{-2}((\lambda'\Pi X_g'u_g)^2-\sigma_{g}^2).
\end{align*}
\end{lem}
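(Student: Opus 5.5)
The plan is a direct algebraic computation. First express the residuals through the true errors, then expand $\hat\sigma^2$ as a quadratic form in the sample mean of the scores, and finally match each resulting term with the blocks of $\Gamma$ and with $\omega_{2g}$ and $\omega_{3g}$. No probabilistic argument is needed. The identity holds pathwise, given that $\sum_g X_g'X_g$ is invertible, which is already implicit in the definition of $\hat\beta$.

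\textbf{Step 1: residuals in terms of the errors.} Write $S=\frac1G\sum_{h=1}^G X_h'u_h$. Since $Y_g=X_g\beta+u_g$, we have $\hat\beta-\beta=\Pi S$ and hence $\hat u_g=u_g-X_g\Pi S$. Therefore
\[
\lambda'\Pi X_g'\hat u_g=\lambda'\Pi X_g'u_g-\lambda'\Pi X_g'X_g\Pi S .
\]

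\textbf{Step 2: expand the square and average.} Squaring and averaging over $g$ splits $\hat\sigma^2$ into three pieces.
\begin{itemize}
\item The pure term $\frac1G\sum_g(\lambda'\Pi X_g'u_g)^2$. Divided by $\sigma^2$, and using $\sigma^2=\frac1G\sum_g\sigma_g^2$, it equals $1+\frac1G\sum_g\omega_{3g}$.
\item The cross term $-\frac2G\sum_g(\lambda'\Pi X_g'u_g)(\lambda'\Pi X_g'X_g\Pi S)$.
\item The quadratic term $S'\Pi\bigl[\frac1G\sum_g X_g'X_g\Pi'\lambda\lambda'\Pi X_g'X_g\bigr]\Pi S$. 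Here I use that $\Pi$ is symmetric.
\end{itemize}

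\textbf{Step 3: identify the blocks.} Partition $\bar\omega_2=\frac1G\sum_g\omega_{2g}$ into a top block $a=\sigma^{-1}\Pi S$ and a bottom block
\[
b=\frac1G\sum_g\sigma^{-1}X_g'X_g\Pi'\lambda\lambda'\Pi X_g'u_g .
\]
Each summand of the cross term is a scalar, so it equals its own transpose. Rearranging gives $(\lambda'\Pi X_g'X_g\Pi S)(\lambda'\Pi X_g'u_g)=S'\Pi X_g'X_g\Pi'\lambda\lambda'\Pi X_g'u_g$. After dividing by $\sigma^2$, the cross term therefore equals $-2a'b$.

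The quadratic term divided by $\sigma^2$ is $a'Ma$, where $M=\frac1G\sum_gX_g'X_g\Pi'\lambda\lambda'\Pi X_g'X_g$.

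With the stated $\Gamma$, we get $\bar\omega_2'\Gamma\bar\omega_2=-a'Ma+a'b+b'a=-a'Ma+2a'b$. Hence
\[
\hat\sigma^2/\sigma^2=1-\bar\omega_2'\Gamma\bar\omega_2+\frac1G\sum_g\omega_{3g}.
\]
Taking the nonnegative square root, which is legitimate since $\hat\sigma,\sigma\ge0$, yields the lemma.

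\textbf{Main obstacle.} There is no real obstacle, only bookkeeping. The step that needs care is the transposition in the cross term, which makes it collapse to exactly $a'b$ with the bottom block of $\omega_{2g}$ as defined. One must also check that the off-diagonal identity blocks of $\Gamma$ produce the factor $2$, and that the sign of the $-M$ block makes the quadratic term enter with the correct sign.
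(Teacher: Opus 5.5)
Your proposal is correct and follows the same route as the paper. The paper likewise substitutes $\hat u_g=u_g-X_g(\hat\beta-\beta)$, expands the square into pure, cross and quadratic terms, and writes $\hat\beta-\beta=\Pi\frac{1}{G}\sum_g X_g'u_g$; your explicit check that $\bar\omega_2'\Gamma\bar\omega_2=-a'Ma+2a'b$ supplies the block identification with $\Gamma$ that the paper leaves implicit.
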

\begin{proof}
See Appendix \ref{sec:proof_lemma1}.
\end{proof}

\subsection{Main Results}
We can use the (second-order) Cram\'er-Edgeworth expansion for the $t$-statistic's null distribution, for which we provide a sufficient condition in Section \ref{sub:sufficient}.
To introduce the Cram\'er-Edgeworth expansion, we define the following moments of $\omega_{1g}$ and $\omega_{2g}$:
$$
(\mu_{1,2}',\mu_{2,2},\mu_{1,1,1},\mu_{1,1,1,1})
=
\frac{1}{G}\sum_{g=1}^GE\left[(
\omega_{1g}\omega_{2g}',\omega_{2g}'\Gamma\omega_{2g},\omega_{1g}^3,\omega_{1g}^4)\mid\mathbf{X}=\mathbf{x}
\right].
$$
Using these moments, we introduce the second Edgeworth polynomial
$$
q_2(z)=-\left(\frac{1}{2}(k_2+k_1^2)He_1(z)+\frac{1}{24}(k_4+4k_1k_3)He_3(z)+\frac{1}{72}k_3^2He_5(z)\right),
$$
where $He_r$ are the $r$-th order Hermite polynomials, and  $(k_1,k_2,k_3,k_4)=(\nu_1,\nu_2-\nu_1^2,\nu_3-3\nu_1,\nu_4-4\nu_1\nu_3-6\nu_2+12\nu_1^2)$
with
\[
\begin{pmatrix}
\nu_1\\[2pt]
\nu_2\\[2pt]
\nu_3\\[2pt]
\nu_4
\end{pmatrix}
=
\begin{pmatrix}
-\dfrac{\mu_{1,1,1}}{2}\\
2\mu_{1,1,1}^{\,2}
+\mu_{2,2}
+2\mu_{1,2}^{\prime}\Gamma\mu_{1,2}\\
-\dfrac{7}{2}\mu_{1,1,1}\\
-2\mu_{1,1,1,1}
+28\mu_{1,1,1}^{\,2}
+6\mu_{2,2}
+24\mu_{1,2}^{\prime}\Gamma\mu_{1,2}
\end{pmatrix}.
\]
We can then introduce the (second-order) Cram\'er-Edgeworth expansion for the absolute value of the $t$ statistic,
\begin{equation}\label{eq:edgeworth_expansion}
\sup_{z\in\mathcal{N}}\Big|Pr(|t|\leq z\mid\mathbf{X}=\mathbf{x})-(2 \Phi(z) - 1 +2G^{-1}q_2(z)\phi(z))\Big| = o(G^{-1})\mbox{ under }H_0,
\end{equation}
where $\mathcal{N}$ is some neighborhood of $\Phi^{-1}(1-\alpha/2)$.

The function $q_2(z)$ in the (second-order) Cram\'er-Edgeworth expansion is only known up to parameters $\mu_{1,2},\mu_{2,2},\mu_{1,1,1},\mu_{1,1,1,1}$. We estimate them using their sample analogs:
$$
(\hat\mu_{1,2}',\hat\mu_{2,2},\hat\mu_{1,1,1},\hat\mu_{1,1,1,1})
=
\frac{1}{G}\sum_{g=1}^G\left(
\hat\omega_{1g}\hat\omega_{2g}',\hat\omega_{2g}'\Gamma\hat\omega_{2g},\hat\omega_{1g}^3,\hat\omega_{1g}^4\right),
$$
where $(\hat\omega_{1g},\hat\omega_{2g})$ is defined by replacing $(\sigma,u_g)$ with $(\hat\sigma,\hat{u}_g)$ in the expression for $(\omega_{1g},\omega_{2g})$. We  construct the estimator $\hat{q}_2(z)$ for $q_2(z)$ using these sample analogs. We use the Cornish-Fisher expansion (i.e., the inversion of the Cram\'er-Edgeworth expansion) with the estimated value of $\hat{q}_2(z)$. Namely, our \emph{analytical} critical value consists of the usual two-sided normal critical value and $O(G^{-1})$ correction term,
$$
\hat{cv}=\Phi^{-1}(1-\alpha/2)-  G^{-1}\hat q_2(\Phi^{-1}(1-\alpha/2)).
$$
The resulting confidence interval for $\lambda'\beta$ is $\lambda'\hat\beta\pm \hat{cv}\hat\sigma/\sqrt{G}$.

We impose the following conditions on the moments of the data.
\begin{assumption}\label{assumption:moments_and_invertibility}
(i) $\sigma^2$ and the minimum eigenvalue of $\frac{1}{G}\sum_{g=1}^Gx_{g}'x_{g}$ are bounded away from zero uniformly in $G$.(ii) $\left\| x_g'x_g\right\|^4$ is  bounded uniformly in $g$. (iii) $E[\left\| X_g'u_g\right\|^4\mid\mathbf{X}=\mathbf{x}]$ is bounded uniformly in $g$.
\end{assumption}

Even if the Cram\'er-Edgeworth expansion fails, we can still    control size with an asymptotic approximation error of $o(1)$, as long as the $t$ statistic is asymptotically  normal under the null hypothesis.

\begin{thm}\label{thm:theorem_robust}
(i) If the $t$-statistic $t$ converges in distribution to the standard normal distribution and $\hat q_2(\Phi^{-1}(1-\alpha/2))=o_p(G)$, then $Pr(|t|\leq \hat{cv}\mid\mathbf{X}=\mathbf{x})=1-\alpha+o(1)$ under the null hypothesis $H_0$.\footnote{The condition $\hat q_2(\Phi^{-1}(1-\alpha/2))=o_p(G)$ holds, e.g., if one estimates $\mu_{1,2},\mu_{2,2},\mu_{1,1,1},\mu_{1,1,1,1}$ with truncated means using a threshold value of $o(G)$.}
(ii) If Assumption \ref{assumption:moments_and_invertibility} holds, then $t$ converges in distribution to the standard normal distribution under the null hypothesis $H_0$.
\end{thm}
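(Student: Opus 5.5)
For part (i), the plan is a Slutsky-type argument. Set $z_\alpha=\Phi^{-1}(1-\alpha/2)$. The hypothesis $\hat q_2(z_\alpha)=o_p(G)$ gives $G^{-1}\hat q_2(z_\alpha)=o_p(1)$, so $\hat{cv}\to_p z_\alpha$. Since $t\to_d N(0,1)$ and $|\cdot|$ is continuous, the continuous mapping theorem gives $|t|\to_d |N(0,1)|$. Slutsky's lemma then gives
\[
|t|-\hat{cv}+z_\alpha\to_d |N(0,1)|.
\]
The limit distribution is continuous at $z_\alpha$, so
\[
Pr(|t|\le\hat{cv}\mid\mathbf{X}=\mathbf{x})=Pr(|t|-\hat{cv}+z_\alpha\le z_\alpha\mid\mathbf{X}=\mathbf{x})\to 2\Phi(z_\alpha)-1=1-\alpha.
\]
If one prefers to avoid joint convergence, the same conclusion follows from a sandwich: for any $\epsilon>0$, the event $\{|t|\le\hat{cv}\}$ lies between $\{|t|\le z_\alpha-\epsilon\}$ and $\{|t|\le z_\alpha+\epsilon\}$, up to the event $\{|\hat{cv}-z_\alpha|>\epsilon\}$, whose probability vanishes. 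Letting $\epsilon\downarrow 0$ finishes it.

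For part (ii), write $t=\big(G^{-1/2}\sum_g\omega_{1g}\big)/(\hat\sigma/\sigma)$ under $H_0$, treat the numerator and denominator separately, and combine them by Slutsky.

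\emph{Numerator.} Conditional on $\mathbf{X}=\mathbf{x}$, the $\omega_{1g}$ are independent. They have mean zero because $E[u_g\mid X_g]=0$ and $\Pi$ is a function of $\mathbf{x}$. Their average variance is $G^{-1}\sum_g\sigma_g^2/\sigma^2=1$. I will verify Lyapunov's condition with fourth moments:
\[
|\omega_{1g}|\le \sigma^{-1}\|\lambda\|\,\|\Pi\|\,\|x_g'u_g\|.
\]
Here $\|\Pi\|$ is bounded by Assumption \ref{assumption:moments_and_invertibility}(i), which bounds the minimum eigenvalue of $\frac1G\sum_g x_g'x_g$ away from zero, and $\sigma^{-1}$ is bounded by the same part. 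So $E[\omega_{1g}^4\mid\mathbf{X}=\mathbf{x}]$ is bounded uniformly in $g$ by part (iii). Hence $G^{-2}\sum_g E\omega_{1g}^4=O(G^{-1})\to0$, and the Lyapunov CLT for triangular arrays gives $G^{-1/2}\sum_g\omega_{1g}\to_d N(0,1)$.

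\emph{Denominator.} By Lemma \ref{lemma:variace_quadratic} it suffices to show $\bar\omega_2:=G^{-1}\sum_g\omega_{2g}=o_p(1)$, $\Gamma=O(1)$, and $G^{-1}\sum_g\omega_{3g}=o_p(1)$.
\begin{itemize}
\item $\Gamma$: its blocks involve $G^{-1}\sum_g x_g'x_g\Pi'\lambda\lambda'\Pi x_g'x_g$, which is bounded by part (ii) and the bound on $\|\Pi\|$.
\item $\bar\omega_2$: the $\omega_{2g}$ are independent with mean zero. Their second moments are bounded using $\|x_g'x_g\|$ bounded and $E\|x_g'u_g\|^2$ bounded, so $E\|\bar\omega_2\|^2=O(G^{-1})$, and Chebyshev gives $\bar\omega_2=o_p(1)$.
\item $\omega_{3g}$: these are mean zero and independent, with $E\omega_{3g}^2\le C\,E\|x_g'u_g\|^4$ bounded by part (iii). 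So $\mathrm{Var}(G^{-1}\sum_g\omega_{3g})=O(G^{-1})$, and again $G^{-1}\sum_g\omega_{3g}=o_p(1)$.
\end{itemize}
Then $\hat\sigma^2/\sigma^2=1+o_p(1)$, and the continuous mapping theorem gives $\hat\sigma/\sigma\to_p1$. Slutsky concludes $t\to_d N(0,1)$.

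The main obstacle, though mild, is the careful uniform bookkeeping. All bounds must hold uniformly in $g$ and $G$. In particular, the bound on $\|\Pi\|$ must be derived from the eigenvalue condition, and the norm of $x_g'x_g$ must be controlled. Part (ii) bounds $\|x_g'x_g\|^4$, hence $\|x_g'x_g\|$ itself. These bounds are what make the fourth moments of $\omega_{1g}$ and the second moments of $\omega_{2g},\omega_{3g}$ uniformly bounded.
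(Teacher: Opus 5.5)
Your proposal is correct and follows the paper's proof essentially step for step. For part (i) you use Slutsky's theorem with $\hat{cv}\to_p\Phi^{-1}(1-\alpha/2)$, and for part (ii) you combine the Lyapunov CLT (via bounded fourth moments) for $G^{-1/2}\sum_g\omega_{1g}$ with Lemma~\ref{lemma:variace_quadratic} and a second-moment law of large numbers for $G^{-1}\sum_g\omega_{2g}$ and $G^{-1}\sum_g\omega_{3g}$ to get $\hat\sigma/\sigma\to_p 1$, exactly as the paper does, while also making explicit the uniform bounds on $\|\Pi\|$, $\Gamma$, and the moments of $\omega_{1g},\omega_{2g},\omega_{3g}$ that the paper leaves implicit.
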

\begin{proof}
See Appendix \ref{sec:proof_theorem_robust}.
\end{proof}

In Theorem \ref{thm:theorem1} below, we show the asymptotic refinement for the proposed critical value.

\begin{thm}\label{thm:theorem1}
Suppose Assumption \ref{assumption:moments_and_invertibility} holds. If $E[\left\| X_g'u_g\right\|^{16}\mid\mathbf{X}=\mathbf{x}]$ is bounded uniformly in $g$ and the Cram\'er-Edgeworth expansion in \eqref{eq:edgeworth_expansion} holds, then $Pr(|t|\leq \hat{cv}\mid\mathbf{X}=\mathbf{x})=1-\alpha+o(G^{-1})$ under the null hypothesis $H_0$.
\end{thm}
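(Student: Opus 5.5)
The plan is to reduce the statement to the Cram\'er--Edgeworth expansion \eqref{eq:edgeworth_expansion}, evaluated at a \emph{deterministic} critical value, plus a perturbation argument for the estimation error in $\hat q_2$. Write $z_\alpha=\Phi^{-1}(1-\alpha/2)$ and define the infeasible critical value $cv^*=z_\alpha-G^{-1}q_2(z_\alpha)$.

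\textbf{Step 1 (infeasible critical value).} Since $cv^*\in\mathcal N$ for $G$ large, \eqref{eq:edgeworth_expansion} gives
$$Pr(|t|\le cv^*\mid\mathbf X=\mathbf x)=2\Phi(cv^*)-1+2G^{-1}q_2(cv^*)\phi(cv^*)+o(G^{-1}).$$
A Taylor expansion of $\Phi$ around $z_\alpha$ yields $2\Phi(cv^*)=2\Phi(z_\alpha)-2G^{-1}q_2(z_\alpha)\phi(z_\alpha)+O(G^{-2})$. Also $q_2(cv^*)\phi(cv^*)=q_2(z_\alpha)\phi(z_\alpha)+O(G^{-1})$. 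Both steps use that $q_2$ is a polynomial whose coefficients are bounded uniformly in $G$. That boundedness follows from Assumption \ref{assumption:moments_and_invertibility}: part (i) bounds $\Pi$ and $\sigma^{-1}$, part (ii) bounds $\Gamma$ and $x_g'x_g$, and the moment bound (the 16th moment suffices) bounds $\mu_{1,1,1,1}$ and the other moments. The first-order terms cancel, so $Pr(|t|\le cv^*\mid\mathbf X=\mathbf x)=1-\alpha+o(G^{-1})$.

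\textbf{Step 2 (consistency of $\hat q_2$).} The goal is $\hat q_2(z_\alpha)-q_2(z_\alpha)=O_p(G^{-1/2})$, or at least $o_p(1)$ at a polynomial rate. I would write $\hat u_g=u_g-x_g(\hat\beta-\beta)$ with $\hat\beta-\beta=O_p(G^{-1/2})$, and use $\hat\sigma/\sigma=1+O_p(G^{-1/2})$ from Lemma \ref{lemma:variace_quadratic}. Each $\hat\mu$ then decomposes into the sample mean of $\omega$-moments plus remainders that are polynomial in $(\hat\beta-\beta)$ times averages of bounded quantities. The centered sample means, such as $G^{-1}\sum_g(\omega_{1g}^4-E[\omega_{1g}^4\mid\mathbf X=\mathbf x])$, are $O_p(G^{-1/2})$ by Chebyshev, because the eighth moment of $\|X_g'u_g\|$ is bounded.

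\textbf{Step 3 (feasible critical value --- main obstacle).} Here $\hat{cv}=cv^*-G^{-1}\Delta$ with $\Delta=\hat q_2(z_\alpha)-q_2(z_\alpha)=o_p(1)$, but $\Delta$ is random and correlated with $t$. Simply sandwiching $\hat{cv}$ between $cv^*\pm\epsilon G^{-1}$ fails, because $P(|\Delta|>\epsilon)$ must itself be $o(G^{-1})$. This is why the stronger 16th-moment condition is needed. I would first show $P(|\Delta|>\epsilon)=o(G^{-1})$ for each $\epsilon>0$. Fourth-moment (Rosenthal/Markov) bounds on the centered means of $\omega_{1g}^4$, whose fourth moments involve $\|X_g'u_g\|^{16}$, give $P(|G^{-1}\sum(\cdot)|>\epsilon)=O(G^{-2})$. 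Similar bounds on $\hat\beta-\beta$ and on $\hat\sigma/\sigma-1$ control the estimation-induced remainders, again on events of probability $O(G^{-2})$. Then
$$Pr(|t|\le cv^*-\epsilon G^{-1})-o(G^{-1})\le Pr(|t|\le\hat{cv})\le Pr(|t|\le cv^*+\epsilon G^{-1})+o(G^{-1}).$$
Applying \eqref{eq:edgeworth_expansion} at $cv^*\pm\epsilon G^{-1}$, as in Step 1, gives $1-\alpha\pm2\epsilon\phi(z_\alpha)G^{-1}+o(G^{-1})$. Letting $\epsilon\downarrow0$ completes the proof. The delicate bookkeeping is the tail bound for $\Delta$, especially the nonlinear dependence of $\hat q_2$ on $\hat\sigma^{-1}$. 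To handle it, I would restrict to the event $\{|\hat\sigma/\sigma-1|\le1/2\}$, whose complement has probability $O(G^{-2})$ by the same moment bounds applied through Lemma \ref{lemma:variace_quadratic}, and linearize on that event.
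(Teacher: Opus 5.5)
Your proposal is correct and follows the paper's proof essentially step for step. The paper also sandwiches $Pr(|t|\le\hat{cv}\mid\mathbf{X}=\mathbf{x})$ between the probabilities at the deterministic points $z-G^{-1}q_2(z)\pm G^{-1}\varepsilon_G$, controls $Pr(|\hat q_2(z)-q_2(z)|>\varepsilon_G\mid\mathbf{X}=\mathbf{x})=o(G^{-1})$ through a fourth-moment Markov bound (which is where the 16th moments enter) combined with a continuous-mapping lemma for sums, products and inverses, and then applies \eqref{eq:edgeworth_expansion} together with a Taylor expansion. The only difference is that the paper uses a vanishing $\varepsilon_G$ where you fix $\epsilon$ and let $\epsilon\downarrow0$ at the end, which is equivalent; your Step 2 rate is not needed, and your explicit check that the coefficients of $q_2$ are bounded uniformly in $G$ fills in a point the paper leaves implicit.
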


\begin{proof}
See Appendix \ref{sec:proof_theorem1}.
\end{proof}

The bounded 16th moment assumption is crucial for our inference because we need to estimate the population objects of $\mu_{1,2}$, $\mu_{2,2}$, $\mu_{1,1,1}$, and $\mu_{1,1,1,1}$ in the Cram\'er-Edgeworth expansion with negligible error (formalized in Lemma \ref{assn:assumption_estimated_q2_rate}).
Note that $\mu_{1,1,1,1}$ is the average fourth moment of $\omega_{1g}$'s and we estimate it by $\hat\mu_{1,1,1,1}$. In the proof, we bound the fourth moment of the estimator and thus require bounded 16th moments of $\omega_{1g}$'s.

Combining the above two theorems, our critical value dominates the normal critical value in the sense that (i) our critical value is asymptotically valid whenever the asymptotic normality of $t$ holds and (ii) it offers an asymptotic refinement in some cases (e.g., when the assumptions in Theorem \ref{thm:theorem1} hold).

\begin{rem}\label{rem:another_var_estimation}
If we use another asymptotic variance estimator for $\lambda'\hat{\beta}$ denoted by $\tilde\sigma^2$, we can apply the same strategy of inverting the Cram\'er-Edgeworth expansion by deriving the expansion of $\sqrt{G}({\lambda'\hat{\beta}-c_0})/{\tilde\sigma}$. Alternatively, we can still use the same confidence interval $\lambda'\hat\beta\pm \hat{cv}\hat\sigma/\sqrt{G}$ even if we use $\tilde\sigma^2$.\footnote{With the asymptotic variance estimator $\tilde\sigma^2$, the $t$-statistic is $\sqrt{G}({\lambda'\hat{\beta}-c_0})/{\tilde\sigma}=({\hat\sigma}/{\tilde\sigma})t$. We can accordingly change the critical value as $(\hat\sigma/\tilde\sigma)\hat{cv}$, but the resulting confidence interval is still the same.}
\end{rem}

\begin{rem}\label{rem:different_edgeworth}
In this paper, we treated $\mathbf{X}$ as fixed, so we do not apply the Cram\'er-Edgeworth expansion to $\frac{1}{G}\sum_{g=1}^GX_{g}'X_{g}$. Our approach is different from \cite{djogbenou2019asymptotic}, who consider the regressor $\mathbf{X}$ as random when they derive the Cram\'er-Edgeworth expansion for the $t$-statistic $t$.\footnote{They do not use the Cram\'er-Edgeworth expansion to construct an asymptotically refined inference, but instead use it to show that the wild cluster bootstrap cannot achieve the asymptotic refinement of $o(G^{-1})$ unless the third cumulant of the score is zero. As a result, this remark is not relevant for their analysis of the wild cluster bootstrap because their conclusion relies on the fact that the wild cluster bootstrap cannot replicate the third cumulant in the Cram\'er-Edgeworth expansion.} First, our expansion does not require the independence of $X_{g}$ across $g$. As a consequence, we allow $X_1,\ldots,X_G$ to be generated from adaptive randomization. Second, the assumption in \cite{djogbenou2019asymptotic} for the Cram\'er-Edgeworth expansion excludes empirically relevant cases, such as binary regressors. With the random regressor, they apply the Cram\'er-Edgeworth expansion to the term $\frac{1}{G}\sum_{g=1}^GX_{g}'X_{g}$ and assume the Cram\'er condition on $X_{g}'X_{g}$. This condition fails if $X_g$ includes discrete variables \citep[p.207]{bhattacharya2010normal}.\footnote{Consider the case where $X_{ig}$ includes a binary variable $D_g$. The matrix $X_g'X_g$ includes $N_gD_g^2$, which takes the two values of $N_g$ and $0$. By choosing the vector $a$ such that $a'\mathrm{vech}(X_{g}'X_{g})=\|a\|N_gD_g^2$, we have $E[\exp(ia'\mathrm{vech}(X_{g}'X_{g}))]=Pr(D_g=1)\exp(i\|a\|N_g)+(1-Pr(D_g=1))=1$ if $\|a\|N_g$ is a multiple of $2\pi$. This implies that the Cram\'er condition fails.} In this paper, we do not require the Cram\'er condition on $X_{g}'X_{g}$.  Third, the resulting Cram\'er-Edgeworth expansion has fewer unknown parameters to estimate. Namely, we do not need to estimate the moments of  $X_{g}'X_{g}$.
\end{rem}

\subsection{Sufficient Conditions for Expansion in Theorem \ref{thm:theorem1}}\label{sub:sufficient}

In this subsection, we provide a sufficient condition for the Cram\'er-Edgeworth expansion in Theorem~\ref{thm:theorem1}. Define the augmented score vector
$$
\tilde\eta_g=
\left(\begin{array}{c}
(1,\mathrm{vec}(X_g'X_g)')'\otimes(X_g'u_g)\\
(X_g'u_g)\otimes(X_g'u_g)
\end{array}\right).
$$
Since some components in $\tilde\eta_g$ could be linearly dependent, we can construct $\eta_g$ by removing the linearly dependent components (cf. Remark \ref{remark:removal}).
This removal is necessary to normalize the random variable $\frac{1}{\sqrt{G}}\sum_{g=1}^G\eta_g$ by using the matrix square root of its variance matrix.

\begin{thm}\label{thm:theorem3}
The Cram\'er-Edgeworth expansion in \eqref{eq:edgeworth_expansion} holds if (i) the eigenvalues of $Var(\eta_g\mid\mathbf{X}=\mathbf{x})$ are bounded away from zero  uniformly in $g$, (ii) $a'\eta_g$ has a pdf that is bounded uniformly in $(g,a)$ with $a\in \mathbb R^{\operatorname{dim}\eta}$, $\|a\|=1$, and (iii) for every positive integer $j$, the $j$th moment of $\eta_g$ is bounded uniformly in $g$.
\end{thm}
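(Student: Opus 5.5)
The natural route is the smooth function model of \citet[Ch.~2]{hall1992bootstrap}, adapted to independent but non-identically distributed summands.

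\textbf{Step 1: write $t$ as a smooth function of a sample mean.} By Lemma \ref{lemma:variace_quadratic}, under $H_0$,
$$t=\sqrt{G}\,\frac{\bar\omega_1}{\sqrt{1-\bar\omega_2'\Gamma\bar\omega_2+\bar\omega_3}},$$
where $\bar\omega_j=G^{-1}\sum_g\omega_{jg}$. With $\mathbf{X}=\mathbf{x}$ fixed, each $\omega_{jg}$ is a deterministic linear transformation of $\tilde\eta_g$. Indeed, $\omega_{1g}$ and $\omega_{2g}$ are linear in $X_g'u_g$ and $(X_g'X_g)\otimes X_g'u_g$. Also $\omega_{3g}$ is linear in $(X_g'u_g)\otimes(X_g'u_g)$ minus its conditional mean. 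The coefficient matrices depend on $\Pi$, $\lambda$, $\sigma$ and $\frac1G\sum_g x_g'x_gx_g'x_g$; these are bounded by Assumption \ref{assumption:moments_and_invertibility}. Hence $t=\sqrt{G}\,A(\bar\eta-E\bar\eta)$ for a function $A$ that is smooth near the origin, with $A(0)=0$ and uniformly bounded derivatives.

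\textbf{Step 2: Edgeworth expansion for $\bar\eta$.} Let $S_G=G^{-1/2}\sum_g(\eta_g-E\eta_g)$ and $V_G=Var(S_G)$. Condition (i) makes $V_G$ uniformly nondegenerate, so the standardized sum $V_G^{-1/2}S_G$ is well defined. I would then invoke the non-i.i.d.\ multivariate Edgeworth expansion of \citet[Thm.~20.6]{bhattacharya2010normal} with remainder $o(G^{-1})$ uniformly over Borel sets with small boundary. This requires two things.
\begin{enumerate}[label=(\alph*)]
\item Bounded average fourth (in fact fifth) moments, which follow from (iii).
\item A uniform Cram\'er-type condition for the non-identical summands: $\limsup_G \sup_{\|s\|>b} G^{-1}\sum_g |E e^{is'\eta_g}|<1$ for every $b>0$.
\end{enumerate}
Condition (b) comes from (ii). Write $s=\|s\|a$. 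Because $a'\eta_g$ has a density bounded by some $M$ uniformly in $(g,a)$, and because (iii) bounds its second moment, the characteristic function of $a'\eta_g$ is bounded away from $1$ in modulus, uniformly in $g$, for $|\tau|\ge b$. The standard argument gives this: a density that is bounded and has bounded variance cannot put too much mass near a lattice. Combined with the Riemann--Lebesgue behaviour and equicontinuity from the uniform moment bound, this gives a uniform constant $\rho(b)<1$.

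\textbf{Step 3: transfer to $t$ and compute the polynomial.} Apply the delta-method argument of Bhattacharya--Ghosh. Taylor expand $A$ to third order, so that $t=T_G+O_p(G^{-3/2})$, where $T_G$ is a polynomial in $S_G$. The remainder has probability $o(G^{-1})$ of exceeding $G^{-1}\log G$ in size, by the moment bounds in (iii). The expansion of Step 2 is valid over the sets $\{|A|\le z\}$, whose boundaries are smooth hypersurfaces since $\nabla A(0)\neq0$. Integrating it gives an expansion for $Pr(|t|\le z\mid\mathbf{X}=\mathbf{x})$ with polynomials determined by the first four approximate cumulants of $t$. By symmetry, the odd $G^{-1/2}$ term cancels in the two-sided probability. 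What remains is $2\Phi(z)-1+2G^{-1}q_2(z)\phi(z)+o(G^{-1})$ uniformly in $z$, and hence on $\mathcal N$.

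The cumulant bookkeeping then verifies that the resulting coefficients are the stated $(k_1,\dots,k_4)$ built from $(\nu_1,\dots,\nu_4)$, i.e.\ the moments $\mu_{1,2},\mu_{2,2},\mu_{1,1,1},\mu_{1,1,1,1}$. Concretely, expand $t\approx \sqrt G\,\bar\omega_1(1+\tfrac12\bar\omega_2'\Gamma\bar\omega_2-\tfrac12\bar\omega_3+\tfrac38\bar\omega_3^2)$ and compute $E t$, $E t^2$, $E t^3$, $E t^4$ to order $G^{-1}$. For instance, $\nu_1=-\mu_{1,1,1}/2$ comes from $E[\bar\omega_1\bar\omega_3]=G^{-1}\mu_{1,1,1}$. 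This is routine but tedious.

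\textbf{Main obstacle.} I expect the hardest step to be Step 2(b), the uniform Cram\'er condition for non-identically distributed $\eta_g$ derived from condition (ii). A secondary difficulty is checking that Bhattacharya--Rao's theorem applies with only averaged conditions and a removed-dependence vector $\eta_g$. For (b), I would first try bounding $|E e^{i\tau a'\eta_g}|\le 1-c(b,M,K)$ by an explicit inequality for densities bounded by $M$ with variance at most $K$.
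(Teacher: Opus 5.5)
Your proposal is correct and follows essentially the paper's route. It writes $t$ as a smooth function of the standardized sum of the $\eta_g$ and applies Bhattacharya--Rao's Theorem 20.6. The Cram\'er condition comes from (ii) via exactly the explicit bounded-density, bounded-variance characteristic-function inequality you propose; the paper cites Bobkov et al.\ for it, and, as you say, it is the variance \emph{upper} bound from (iii) that is needed there. Finally, $q_2$ is identified by matching the first four moments of the Taylor-expanded $\tilde t$ in the Bhattacharya--Ghosh manner.

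One slip to fix: if you go through the delta-method remainder, its threshold must be $o(G^{-1})$, e.g.\ $G^{-3/2}\log G$. Your $O_p(G^{-3/2})$ remainder together with (iii) gives this readily, whereas a perturbation of size $G^{-1}\log G$ would only yield an $O(G^{-1}\log G)$ error.
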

\begin{proof}
See Appendix \ref{sec:proof_theorem3}.
\end{proof}

The first assumption imposes a nonzero lower bound on the variance of $\eta_g$ across $g$. The second assumption requires $\eta_g$ to be continuously  distributed. The continuous distribution of $\eta_g$ holds when the distribution of the error term $u_g$ is absolutely continuous (with respect to the $N_g$-dimensional Lebesgue measure). As in the standard arguments for the Cram\'er condition \citep[p.207]{bhattacharya2010normal}, we can impose this condition on a component of the distribution of $\eta_g$, instead of the entire  distribution. The third condition of bounded moments for all orders simplifies the proofs. It is possible to weaken it to the bounded moment condition up to a certain order by truncating $\eta_g$ (cf., \citealp[p.256]{hall1992bootstrap}; \citealp[p.446]{bhattacharya1978validity}).

\begin{rem}\label{remark:removal}
We follow \cite{djogbenou2019asymptotic} when defining $\eta_g$ by removing linearly dependent components from $\tilde\eta_{g}$. Formally, we define $\eta_g$ by using two matrices $\mathbf{M}_1$ and $\mathbf{M}_2$ such that $\tilde\eta_{g}=\mathbf{M}_1\eta_g$ and $\eta_g=\mathbf{M}_2\tilde\eta_{g}$ for every $g$.\footnote{We assume without loss of generality that the first component $\eta_{1g}$ of $\eta_g$ is $\omega_{1g}$.} This definition implicitly assumes the matrix  $Var(\tilde\eta_g\mid\mathbf{X}=\mathbf{x})$ has the same null space across $g$. Since the framework of this paper treats $\mathbf{X}$ as fixed to $\mathbf{x}$, it is possible for the matrix $Var(\tilde\eta_g\mid\mathbf{X}=\mathbf{x})$ to have different null spaces (or different ranks) for different values of $g$. In that case, using an increasing sequence $\{S_\ell\}$ of integers with $S_0=0$, we can consider
$\sum_{g=S_{\ell-1}+1}^{S_{\ell}}\tilde\eta_g$ instead of $\eta_g$ itself,
so that the matrix  $Var(\sum_{g=S_{\ell-1}+1}^{S_{\ell}}\tilde\eta_g\mid\mathbf{X}=\mathbf{x})$ has the same null space across $\ell$.
\end{rem}
 \section{Monte Carlo Simulations}\label{sec:montecarlo}

In this section, we investigate the finite-sample performance of the critical value proposed in Section \ref{sec:proposed_method} using simulated data. We compare our method (denoted by ``Analytical'' in the figures) with a few existing methods, such as (i) the $t_{G-1}$ critical value (``Student''), (ii) the restricted wild cluster bootstrap with Rademacher weights by \cite{cameron2008bootstrap} (``WCB''), and (iii) the pairs percentile-$t$ cluster bootstrap  (``Pairs'') as well as the standard normal critical value (``Normal'').\footnote{When we use the $t_{G-1}$ critical value, we multiply $\hat\sigma^2$ by a finite sample adjustment term: $d_1=\frac{(N-1)G}{(N-k)(G-1)}$, $d_2=\frac{G}{(G-1)}$, and $d_3=\frac{(N-1)G}{(N-k-G)(G-1)}$ with $N=\sum_{g=1}^GN_g$. Option $d_3$ treats each fixed effect as an additional regressor.  We use $d_1$ as the benchmark in the main text. Results for $d_2$ and $d_3$ are provided in Appendix Section \ref{sec:addendix_sim_tables}.

For the pairs cluster bootstrap, we use the Moore-Penrose inverse when we cannot invert a resampled matrix of $\frac{1}{G}\sum_{g=1}^GX_{g}'X_{g}$.}   We also compute the infeasible true Cornish-Fisher expansion as a comparison for our proposed critical value. For all four designs, we use 10,000 simulations; all bootstrap procedures use 1000 draws, and the numbers of clusters are 10, 25, 50, 75, 100, and 200. We consider four designs that are challenging for existing methods but can be accommodated well using our analytical corrections. The first design features binary regressors, which are challenging for the pairs cluster bootstrap, while the second design features skewed errors, which are challenging for the wild cluster bootstrap. The third design combines these two features. The fourth design is a generic fixed-effects panel design with unequal cluster sizes, a binary within-cluster regressor, and skewed errors.

In all the designs, the normal critical value $\Phi^{-1}(1-\alpha/2)$ over-rejects the null hypothesis even when it is true. The additional term of $-G^{-1}\hat q_2(\Phi^{-1}(1-\alpha/2))$ in our proposed critical value reduces the magnitude of over-rejections. The comparison with other methods depends on the design. Especially, the skewness $\mu_{1,1,1}$ affects the comparison between our proposal and the wild cluster bootstrap.

\begin{figure}[t]
\centering
\begin{subfigure}[t]{0.48\linewidth}
\centering
\includegraphics[width=\linewidth,keepaspectratio]{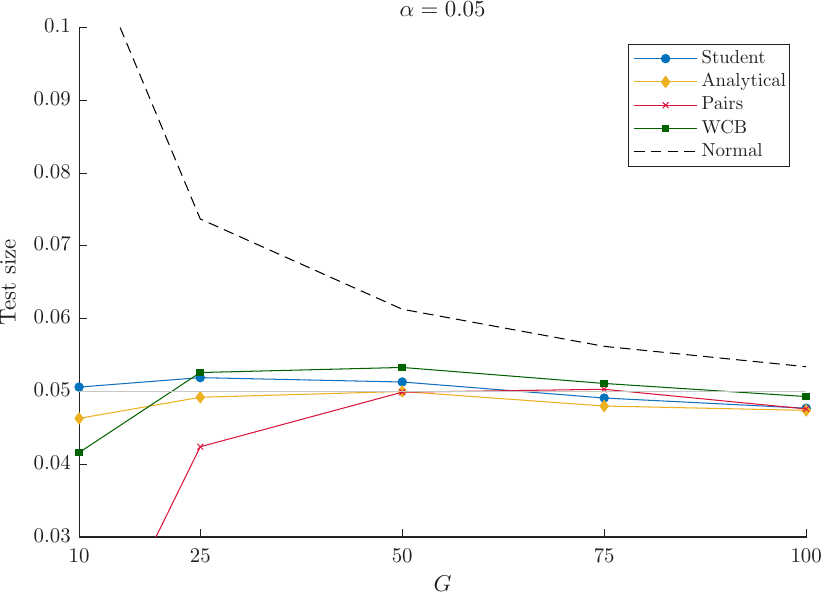}
\caption{Test size}
\end{subfigure}\hfill
\begin{subfigure}[t]{0.48\linewidth}
\centering
\includegraphics[width=\linewidth,keepaspectratio]{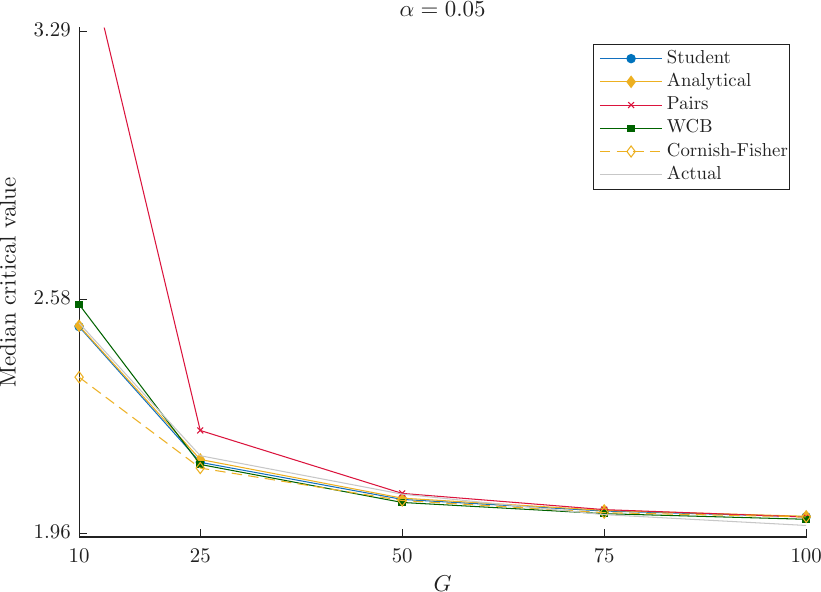}
\caption{Median critical value}
\end{subfigure}
\caption{Test size and median critical values for two-sided tests in the \cite{bertrand2004much} design. The left panel reports test size, and the right panel reports median critical values. $\alpha$ is the nominal test size. Panel (b) vertical grid values correspond to two-sided Normal critical values for tests with nominal sizes 5\%, 1\%, and 0.1\%.}
\label{fig:figure_BDM}
\end{figure}
\begin{figure}[t]
\centering
\begin{subfigure}[t]{0.48\linewidth}
\centering
\includegraphics[width=\linewidth,keepaspectratio]{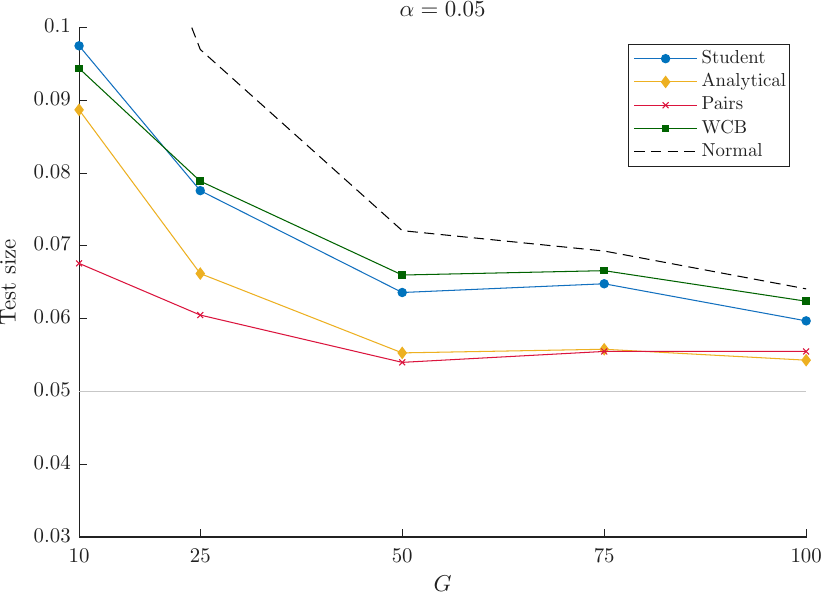}
\caption{Test size}
\end{subfigure}\hfill
\begin{subfigure}[t]{0.48\linewidth}
\centering
\includegraphics[width=\linewidth,keepaspectratio]{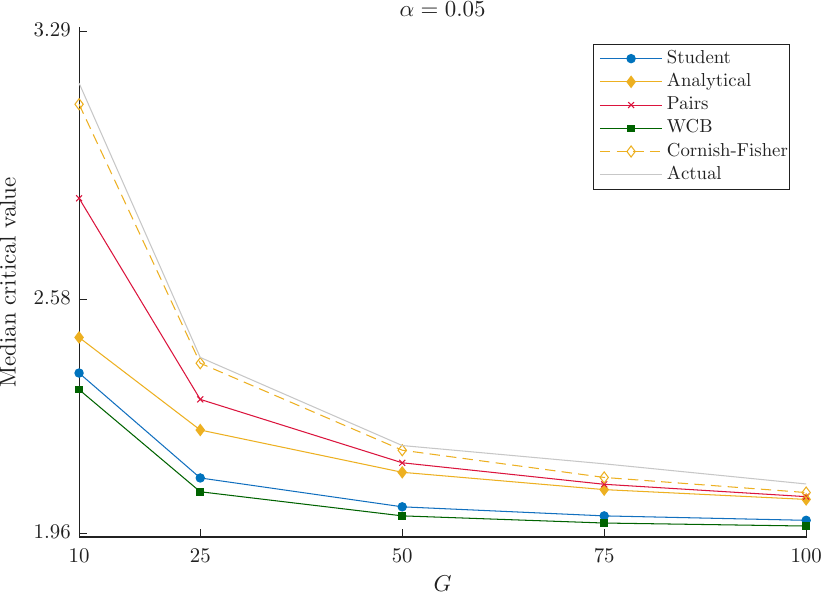}
\caption{Median critical value}
\end{subfigure}
\caption{Test size and median critical values for two-sided tests with a skewed error distribution. The left panel reports test size, and the right panel reports median critical values.  $\alpha$ is the nominal test size. Panel (b) vertical grid values correspond to two-sided Normal critical values for tests with nominal sizes  5\%, 1\%, and 0.1\%.}
\label{fig:figure_SkewedScore}
\end{figure}
\begin{figure}[t]
\centering
\begin{subfigure}[t]{0.48\linewidth}
\centering
\includegraphics[width=\linewidth,keepaspectratio]{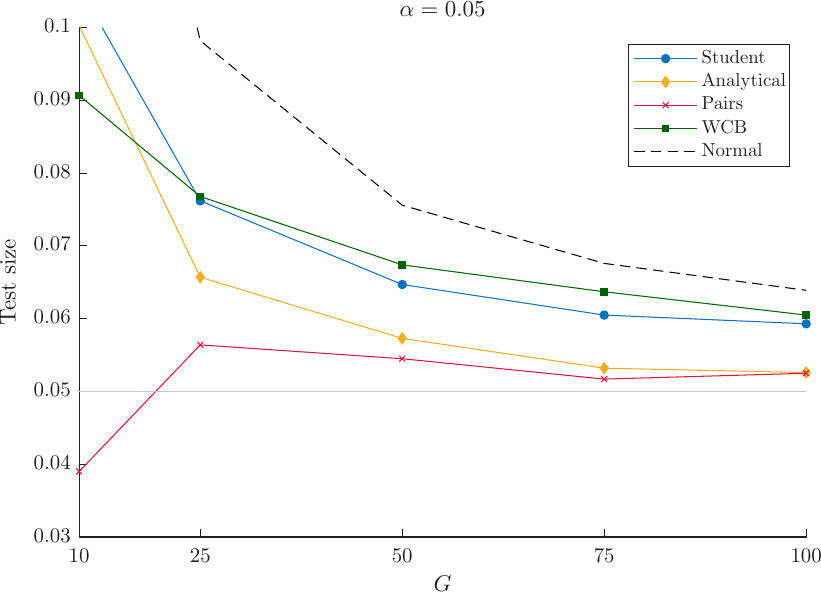}
\caption{Test size}
\end{subfigure}\hfill
\begin{subfigure}[t]{0.48\linewidth}
\centering
\includegraphics[width=\linewidth,keepaspectratio]{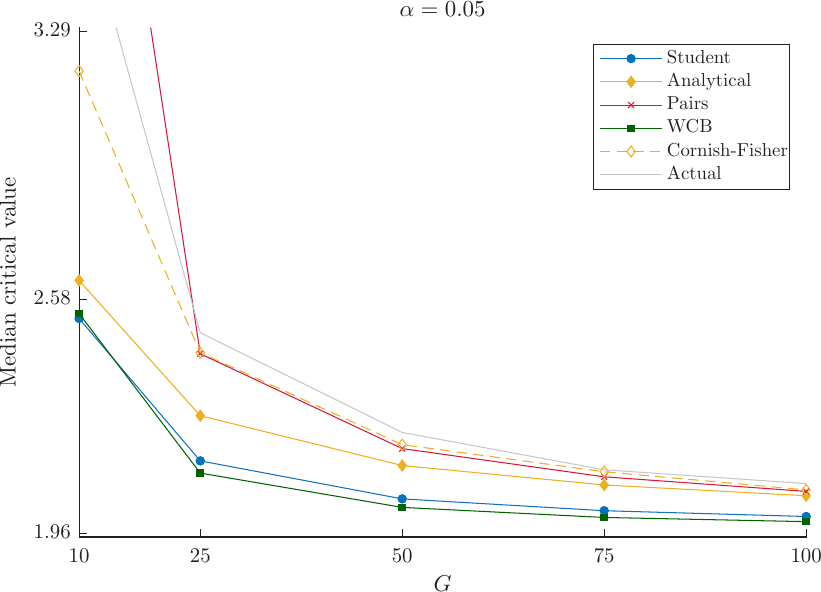}
\caption{Median critical value}
\end{subfigure}
\caption{Test size and median critical values for two-sided tests with a skewed error and a binary regressor. The left panel reports test size, and the right panel reports median critical values.   $\alpha$ is the nominal test size. Panel (b) vertical grid values correspond to two-sided Normal critical values for tests with nominal sizes  5\%, 1\%, and 0.1\%.}
\label{fig:figure_combined}
\end{figure}
\begin{figure}[t]
\centering
\begin{subfigure}[t]{0.48\linewidth}
\centering
\includegraphics[width=\linewidth,keepaspectratio]{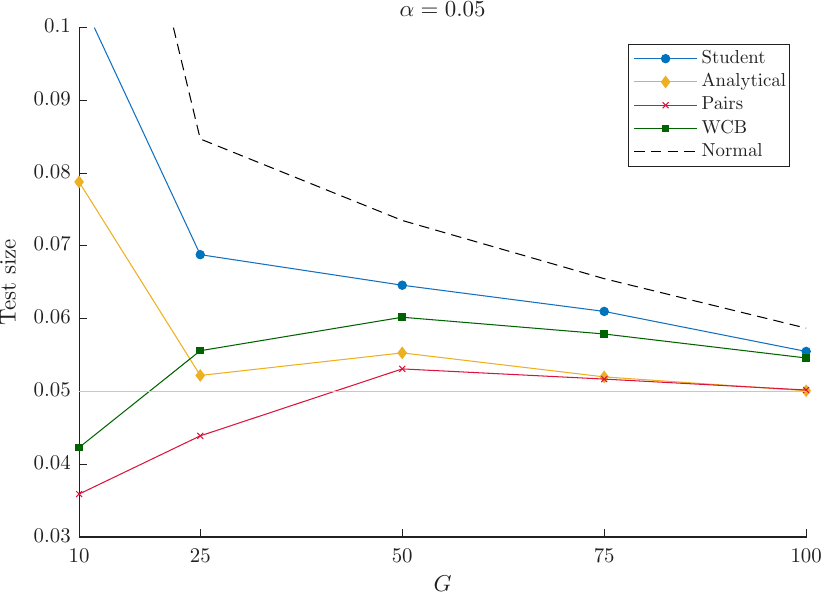}
\caption{Test size}
\end{subfigure}\hfill
\begin{subfigure}[t]{0.48\linewidth}
\centering
\includegraphics[width=\linewidth,keepaspectratio]{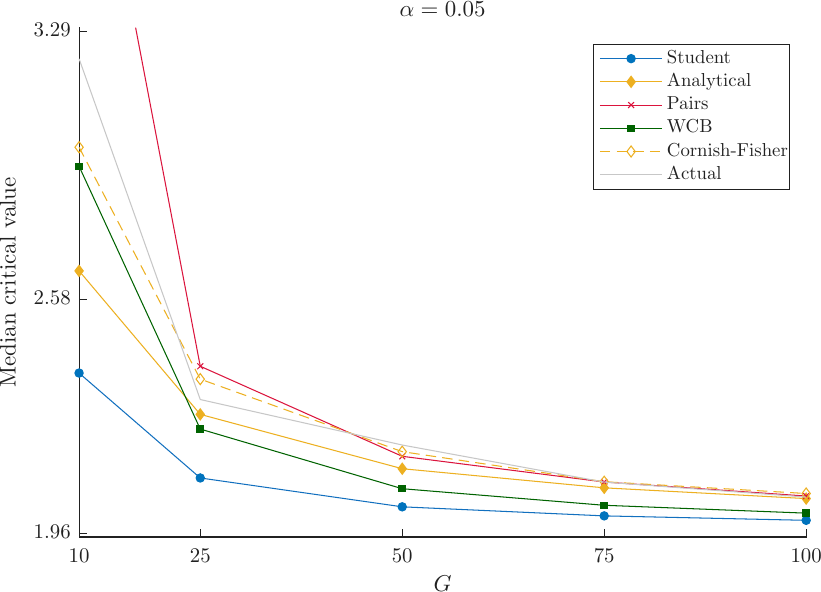}
\caption{Median critical value}
\end{subfigure}
\caption{Test size and median critical values for two-sided tests in the generic fixed-effects design. The left panel reports test size, and the right panel reports median critical values. $\alpha$ is the nominal test size. Panel (b) vertical grid values correspond to two-sided Normal critical values for tests with nominal sizes  5\%, 1\%, and 0.1\%.}
\label{fig:figure_GenericFE}
\end{figure}

\subsection{Design with Binary Regressors}\label{sec:BDM}
In this section, we follow the simulation design (BDM) from \cite{bertrand2004much}  and \citet[Section 5.A]{cameron2008bootstrap}. It uses a state-year panel of excess earnings from 1979 to 1999 based on the Current Population Survey.\footnote{We use the data from the replication package of \cite{cameron2015practitioner}: \url{https://cameron.econ.ucdavis.edu/research/papers.html} In particular, this simulation exercise uses the variable \texttt{lnwage} from \texttt{CPS\_panel.dta} from 1979 to 1999.} For each simulation draw, we randomly select $G$ out of 50 states with replacement. We randomly select the policy change time uniformly from $\{1984,\ldots,1993\}$ and assume that half of the $G$ states experience the policy change after the selected time period. We construct the policy dummy variable accordingly. By definition, this policy dummy variable has a zero coefficient in the population. We regress the excess earnings on the policy dummy,  the year dummies, and the state dummies. Then we conduct the significance test for the coefficient of the policy dummy variable.

In the data generating process of this section, the skewness of the score is close to zero (with $\hat\mu_{1,1,1}=0.02$ for $G=10^4$). Although the Cram\'er condition does not hold for this design because all the variables are discrete \citep[cf.,][Ch.5]{bhattacharya2010normal}, we can consider whether a critical value accounts for the skewness and kurtosis of the $t$-statistic, which are the key components of the second-order Cram\'er-Edgeworth expansion. Our proposed method matches these moments by estimating them explicitly. At the same time, the wild cluster bootstrap approximates the skewness and kurtosis well because it uses zero skewness and consistently estimates the kurtosis \citep[Section 5.2]{djogbenou2019asymptotic}.

Figure \ref{fig:figure_BDM} shows the rejection probabilities for different methods.\footnote{In Appendix \ref{sec:addendix_sim_tables}, we provide the tables that report the rejection probabilities.} As documented in \cite{cameron2008bootstrap}, the pairs cluster bootstrap under-rejects for small values of $G$ (e.g., $G=10$), while the inference based on the $t_{G-1}$ critical value over-rejects. All methods control the size approximately when $G$ is sufficiently large (e.g., $G=50$). Our proposed method exhibits comparable performance to the wild cluster bootstrap, even for a small value of $G=10$.

\subsection{Design with a Skewed Error Distribution}\label{sec:skewed_MONTECARLO_DGP}
To compare the methods when the error has large skewness, we consider the case where $N_g=1$, $X_{ig}=1$, and $Y_{ig}$ follow the demeaned exponential distribution with unit variance. This design is used in Section 3 of \cite{hall1983inverting} for one-sided tests. Since the error has a skewness of $2$, Theorem 5.2 of \cite{djogbenou2019asymptotic} implies that the wild cluster bootstrap does not have asymptotic refinement in this case. Then we conduct the test for the mean being equal to zero.

Figure \ref{fig:figure_SkewedScore} shows the rejection probabilities for different methods with the skewed error distribution. Again, the rejection probabilities of all the methods approach the prespecified significance level ($5\%$) as $G$ increases, which confirms their asymptotic validity. However, the finite-sample performance differs. The rejection probabilities of our proposed method approach $5\%$ faster than those of the wild cluster bootstrap. This is consistent with the theoretical fact that the wild cluster bootstrap does not have asymptotic refinement for this data distribution. In contrast, our analytical approach and the pairs cluster bootstrap both achieve asymptotic refinement. This explains why these methods are similar to each other and much closer to the nominal size than the other two methods in Figure \ref{fig:figure_SkewedScore} as $G$ increases.

\subsection{Design with a Skewed Error Distribution and a Binary Regressor}\label{sec:GenericMC}
Now we consider the case with $N_g=1$, $X_g=(1,1)'$ for the first 50\% of $g$ and $(1,0)'$ for the others, and $Y_g = X_g'\beta+(2X_{g2}-1)u_g$ where $\beta=(0,0)'$ and $u_g$ are the demeaned exponential random variables with unit variance. We conduct the significance test for the second component of $\beta$. In this case, we have $\mu_{1,1,1}=2.00$.This design has two features. First, the regressor is binary. As a result, when we use the pairs cluster bootstrap, the resampled value of $\frac{1}{G}\sum_{g=1}^GX_g'X_g$ can be (nearly) singular. As in \cite{cameron2008bootstrap}, this feature may cause the poor size control of the pairs cluster bootstrap. Second, the error distribution is exponential and therefore the skewness of the score $\mu_{1,1,1}$ is non-zero. In such a case, the asymptotic refinement happens for our proposed method but not for the wild cluster bootstrap.

Figure \ref{fig:figure_combined} shows the rejection probabilities using this data generating process. The results confirm the consequences of the above two features. The pairs cluster bootstrap over-rejects when the number of clusters is small. For $G=25$ or larger, we can see our method is closer to $\alpha=5\%$ than the $t_{G-1}$ critical value or the wild cluster bootstrap.

\subsection{Design with Fixed Effects, Uneven Clusters, and Skewed Errors}\label{sec:generic_FE_MONTECARLO_DGP}
Finally, we consider a design with a scalar regressor and a cluster fixed effect. We set $\beta=0$ and conduct a two-sided test of $H_0:\beta=0$. For a given number of clusters $G$, the cluster sizes are unequal and generated as $$N_g=2+\left[2G\exp(g/G)/(\exp(1/G)+\cdots+\exp(G/G))\right]$$ following \citet[Section 4]{djogbenou2019asymptotic}.\footnote{We add 2 in $N_g$, so the binary regressor has a nonzero variation within each cluster even after the within transformation.} We have $N=\sum_{g=1}^GN_g$ observations in total and index the stacked observations by $j$ with $j=N_1+\cdots+N_{g-1}+i$ for observation $ig$. The regressor is binary and is generated as
\begin{align*}
X_{ig}
=
1\left\{j<N/2\mbox{ and }j\text{ is odd}\right\}.
\end{align*}
For each cluster, we draw a fixed effect $\epsilon_g\sim \mathrm{Uniform}(0.5,1)$.\footnote{The simulation results in this section do not change even if we change the distribution of $\epsilon_g$, since the within transformation removes $\epsilon_g$.} Let $\xi_{ig}$ follow the demeaned exponential distribution with unit variance. The outcome is generated by
\begin{align*}
Y_{ig}
=
\epsilon_g+X_{ig}\beta+(2X_{ig}-1)\xi_{ig}.
\end{align*}
The sign of the skewed error, therefore, depends on the value of the binary regressor.

Figure \ref{fig:figure_GenericFE} shows the rejection probabilities using this data generating process. The rejection probabilities of all methods approach the nominal size as $G$ increases. For small values of $G$, the normal and Student critical values over-reject, while the pairs cluster bootstrap can be conservative. The analytical correction is close to the nominal size once $G$ is moderately large and remains comparable to the pairs cluster bootstrap procedure in this fixed-effects design.
 \section{Conclusion}\label{sec:conclusion}

In this paper, we propose an inference method for linear regression with clustered errors that  achieves third-order asymptotic refinements. Unlike the pairs cluster bootstrap, it does not resample the Gram matrix of $\frac{1}{G}\sum_{g=1}^GX_g'X_g$, thus avoiding a small-sample issue of the pairs cluster bootstrap \citep[cf.,][]{cameron2008bootstrap}. Our simulation results show favorable finite-sample performance of the proposed method, especially when compared with  the conventional normal critical values.
Notably, it works comparably to the wild cluster bootstrap in the simulation design based on \cite{bertrand2004much}, and for some designs with skewed distributions, it has better size control than the wild cluster bootstrap. It supports the theoretical comparison that the wild cluster bootstrap does not achieve asymptotic refinement with skewed errors, while the proposed method does.
 
\appendix
\section{Proofs}

\subsection{Proof of Lemma \ref{lemma:variace_quadratic}}\label{sec:proof_lemma1}

\begin{proof}[Proof of Lemma \ref{lemma:variace_quadratic}]
Note that $\lambda'\Pi X_g'\hat{u}_g=\lambda'\Pi X_g'u_g-\lambda'\Pi X_g'X_g(\hat\beta-\beta)$ and that
\begin{align*}
\frac{1}{G}\sum_{g=1}^G(\lambda'\Pi X_g'\hat{u}_g)^2
&=
\frac{1}{G}\sum_{g=1}^G(\lambda'\Pi X_g'u_g)^2
\\&-2\frac{1}{G}\sum_{g=1}^G(\lambda'\Pi X_g'u_g)'\lambda'\Pi X_g'X_g(\hat\beta-\beta)
\\&+\frac{1}{G}\sum_{g=1}^G(\hat\beta-\beta)'X_g'X_g\Pi\lambda\lambda'\Pi X_g'X_g(\hat\beta-\beta)
\\
&=
\frac{1}{G}\sum_{g=1}^G(\lambda'\Pi X_g'u_g)^2
\\&-2\left(\frac{1}{G}\sum_{g=1}^G(X_g'X_g\Pi\lambda\lambda'\Pi X_g'u_g)\right)'(\hat\beta-\beta)
\\&+\frac{1}{G}(\hat\beta-\beta)'\left(\sum_{g=1}^GX_g'X_g\Pi\lambda\lambda'\Pi X_g'X_g\right)(\hat\beta-\beta).
\end{align*}
Therefore,
\begin{align*}
\hat\sigma^2/\sigma^2
&=
1+\frac{1}{G}\sum_{g=1}^G\omega_{3g}
-2\sigma^{-2}\left(\frac{1}{G}\sum_{g=1}^GX_g'X_g\Pi\lambda\lambda'\Pi X_g'u_g\right)'\left(\frac{1}{G}\sum_{g=1}^G\Pi X_g'u_g\right)
\\&+\sigma^{-2}\left(\frac{1}{G}\sum_{g=1}^G\Pi X_g'u_g\right)'\frac{1}{G}\sum_{g=1}^GX_g'X_g\Pi\lambda\lambda'\Pi X_g'X_g\left(\frac{1}{G}\sum_{g=1}^G\Pi X_g'u_g\right).
\end{align*}
\end{proof}

\subsection{Proof of Theorem \ref{thm:theorem_robust}}\label{sec:proof_theorem_robust}

\begin{proof}[Proof of Theorem \ref{thm:theorem_robust}]
By Slutsky's theorem, it suffices to show $\hat{cv}\rightarrow_p\Phi^{-1}(1-\alpha/2)$ and $t\rightarrow_dN(0,1)$. The first statement follows from the assumption of $\hat q_2(\Phi^{-1}(1-\alpha/2))=o_p(G)$.
The numerator of $t$ is $\frac{1}{\sqrt{G}}\sum_{g=1}^G\omega_{1g}$, which converges in distribution to the standard normal distribution by  Lyapunov central limit theorem with bounded fourth average moment. By Lemma \ref{lemma:variace_quadratic}, the square of the denominator $\hat\sigma^2/\sigma^2$ converges in probability to one, because $\frac{1}{G}\sum_{g=1}^G\omega_{2g}=o_p(1)$ and $\frac{1}{G}\sum_{g=1}^G\omega_{3g}=o_p(1)$ by the weak law of large numbers with bounded second average moment.
\end{proof}

\subsection{Proof of Theorem \ref{thm:theorem1}}\label{sec:proof_theorem1}

\begin{lem}\label{lemma:fourth_moment_markov}
Let $\zeta_1,\ldots,\zeta_G$ be any independent random variables with a bounded average fourth central  moment given $\mathbf{X}=\mathbf{x}$.
For any sequence $\{\varepsilon_G\}$ with $\varepsilon_G>0$ and $G\varepsilon_G^4\rightarrow\infty$, we have
$$
Pr\left(\left|\frac{1}{G}\sum_{g=1}^G(\zeta_g-E[\zeta_g\mid\mathbf{X}=\mathbf{x}])\right|>\varepsilon_G\mid\mathbf{X}=\mathbf{x}\right)
=o(G^{-1}).
$$
\end{lem}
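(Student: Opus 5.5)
The plan is to apply Markov's inequality to the fourth power of the centered sample mean. Write $\xi_g=\zeta_g-E[\zeta_g\mid\mathbf{X}=\mathbf{x}]$ and $S_G=\sum_{g=1}^G\xi_g$. Conditionally on $\mathbf{X}=\mathbf{x}$, the $\xi_g$ are independent with mean zero. Markov's inequality gives
$$
Pr\left(\left|\tfrac{1}{G}S_G\right|>\varepsilon_G\mid\mathbf{X}=\mathbf{x}\right)\le \frac{E[S_G^4\mid\mathbf{X}=\mathbf{x}]}{G^4\varepsilon_G^4}.
$$
It therefore suffices to show $E[S_G^4\mid\mathbf{X}=\mathbf{x}]=O(G^2)$. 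Then the bound is $O(G^{-2}\varepsilon_G^{-4})=G^{-1}\cdot O((G\varepsilon_G^4)^{-1})=o(G^{-1})$, because $G\varepsilon_G^4\to\infty$.

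To bound the fourth moment, expand $S_G^4=\sum_{g_1,g_2,g_3,g_4}\xi_{g_1}\xi_{g_2}\xi_{g_3}\xi_{g_4}$ and take conditional expectations. By independence and mean zero, any term in which some index appears exactly once vanishes. Only two types of terms survive:
\begin{itemize}
\item terms with all four indices equal, giving $\sum_g E[\xi_g^4]$;
\item terms with two distinct indices each appearing twice, giving $3\sum_{g\neq h}E[\xi_g^2]E[\xi_h^2]$.
\end{itemize}
The first sum equals $G$ times the average fourth central moment, so it is $O(G)$ by hypothesis.

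For the second sum, bound it by $3\big(\sum_g E[\xi_g^2]\big)^2$. Apply Jensen (or Cauchy--Schwarz) twice: $E[\xi_g^2]\le (E[\xi_g^4])^{1/2}$, and then
$$
\frac{1}{G}\sum_g (E[\xi_g^4])^{1/2}\le\Big(\frac1G\sum_gE[\xi_g^4]\Big)^{1/2}.
$$
Hence $\sum_gE[\xi_g^2]=O(G)$, and the cross-term sum is $O(G^2)$. Together, $E[S_G^4\mid\mathbf{X}=\mathbf{x}]=O(G)+O(G^2)=O(G^2)$, which completes the argument.

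There is no real obstacle here. The only point needing care is that the hypothesis bounds the \emph{average} fourth central moment rather than each moment uniformly. That is why the cross term is handled through the averaged Cauchy--Schwarz step above instead of a per-$g$ bound.
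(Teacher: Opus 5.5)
Your proposal is correct and follows essentially the same route as the paper. Both apply Markov's inequality to the fourth moment, use independence to reduce the expansion to the diagonal term plus $3\sum_{g\neq h}$ cross terms, and control the cross terms by Cauchy--Schwarz and Jensen, which gives $O(G^{-2}\varepsilon_G^{-4})=o(G^{-1})$ under only an average fourth-moment bound. The only cosmetic difference is the order of steps: you first factor $E[\xi_g^2\xi_h^2]=E[\xi_g^2]E[\xi_h^2]$ by independence and then bound each second moment, whereas the paper applies Cauchy--Schwarz to the product directly.
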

\begin{proof}
By Markov's inequality (for the fourth moment), we have
\begin{align*}
&
Pr\left(\left|\frac{1}{G}\sum_{g=1}^G(\zeta_g-E[\zeta_g\mid\mathbf{X}=\mathbf{x}])\right|>\varepsilon_G\mid\mathbf{X}=\mathbf{x}\right)
\\
&\leq
\frac{E\left[\left(\frac{1}{G}\sum_{g=1}^G(\zeta_g-E[\zeta_g\mid\mathbf{X}=\mathbf{x}])\right)^4\mid\mathbf{X}=\mathbf{x}\right]}{\varepsilon_G^4}
\\
&=
\frac{\frac{1}{G^4}\sum_{g=1}^GE[(\zeta_{g}-E[\zeta_{g}\mid\mathbf{X}=\mathbf{x}])^4\mid\mathbf{X}=\mathbf{x}]}{\varepsilon_G^4}\\&+3\frac{\frac{1}{G^4}\sum_{g_1\ne g_2}E[(\zeta_{g_1}-E[\zeta_{g_1}\mid\mathbf{X}=\mathbf{x}])^2(\zeta_{g_2}-E[\zeta_{g_2}\mid\mathbf{X}=\mathbf{x}])^2\mid\mathbf{X}=\mathbf{x}]}{\varepsilon_G^4},
\end{align*}
where the equality follows from the independence across $g$. Note that
\begin{align*}
&\frac{1}{G^4}\sum_{g_1=1}^G\sum_{g_2=1}^GE[(\zeta_{g_1}-E[\zeta_{g_1}\mid\mathbf{X}=\mathbf{x}])^2(\zeta_{g_2}-E[\zeta_{g_2}\mid\mathbf{X}=\mathbf{x}])^2\mid\mathbf{X}=\mathbf{x}]
\\
&\leq
\frac{1}{G^4}\sum_{g_1=1}^G\sum_{g_2=1}^GE[(\zeta_{g_1}-E[\zeta_{g_1}\mid\mathbf{X}=\mathbf{x}])^4\mid\mathbf{X}=\mathbf{x}]^{1/2}E[(\zeta_{g_2}-E[\zeta_{g_2}\mid\mathbf{X}=\mathbf{x}])^4\mid\mathbf{X}=\mathbf{x}]^{1/2}
\\
&=
\frac{1}{G^2}(\frac{1}{G}\sum_{g=1}^GE[(\zeta_{g}-E[\zeta_{g}\mid\mathbf{X}=\mathbf{x}])^4\mid\mathbf{X}=\mathbf{x}]^{1/2})^2
\\
&\leq
\frac{1}{G^2}\frac{1}{G}\sum_{g=1}^GE[(\zeta_{g}-E[\zeta_{g}\mid\mathbf{X}=\mathbf{x}])^4\mid\mathbf{X}=\mathbf{x}],
\end{align*}
where the second inequality follows from the Cauchy-Schwarz inequality, and the last inequality follows from Jensen's inequality. Since $G\varepsilon_G^4\rightarrow\infty$, we have the statement of this lemma.
\end{proof}

\begin{lem}\label{lemma:polynomials_epsilon_G_convergence}
Let $\hat\vartheta_{1,G},\ldots,\hat\vartheta_{L,G}$ be $L$ sequences of random variables. Suppose $\hat\rho_G$ is the composition of a finite number of matrix additions and/or matrix multiplications of $\hat\vartheta_{1,G},\ldots,\hat\vartheta_{L,G}$. Suppose that the sequence  $\sum_{\ell=1,\ldots,L}\|\vartheta_{\ell,G}\|$ is bounded and $\sum_{\ell=1}^LPr(\|\hat\vartheta_{\ell,G}-\vartheta_{\ell,G}\|>\varepsilon_G\mid\mathbf{X}=\mathbf{x})=o(G^{-1})$ for some diminishing sequence $\{\varepsilon_G\}$. Let $\rho_G$ be the population counterpart of $\hat\rho_G$ with replacing $(\hat\vartheta_{1,G},\ldots,\hat\vartheta_{L,G})$ with $(\vartheta_{1,G},\ldots,\vartheta_{L,G})$. Then there is some constant $C$ such that $$Pr(\|\hat\rho_G-\rho_G\|>C\varepsilon_G\mid\mathbf{X}=\mathbf{x})=o(G^{-1}).$$ If, in addition, $\rho_G$ is an invertible matrix whose singular values are uniformly bounded away from zero, then there is some constant $C$ such that $$Pr(\|\hat\rho_G^{-1}-\rho_G^{-1}\|>C\varepsilon_G\mid\mathbf{X}=\mathbf{x})=o(G^{-1}).$$
\end{lem}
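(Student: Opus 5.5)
The plan is to argue by structural induction on the finite composition that defines $\hat\rho_G$. The base case is an input $\hat\vartheta_{\ell,G}$ itself. By hypothesis, $Pr(\|\hat\vartheta_{\ell,G}-\vartheta_{\ell,G}\|>\varepsilon_G\mid\mathbf{X}=\mathbf{x})=o(G^{-1})$, so the claim holds with $C=1$. The population counterparts are bounded uniformly in $G$ since $\sum_\ell\|\vartheta_{\ell,G}\|$ is bounded. For the inductive step, suppose $\hat A,\hat B$ are intermediate quantities with population values $A,B$ satisfying $\|A\|,\|B\|\le K$. Suppose also that the events $\{\|\hat A-A\|>C_A\varepsilon_G\}$ and $\{\|\hat B-B\|>C_B\varepsilon_G\}$ each have conditional probability $o(G^{-1})$.

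For \emph{addition}, the triangle inequality gives $\|(\hat A+\hat B)-(A+B)\|\le(C_A+C_B)\varepsilon_G$ off the union of the two bad events. The union has probability $o(G^{-1})$ by the union bound. The population value $A+B$ is bounded by $2K$. For \emph{multiplication}, write
$$\hat A\hat B-AB=(\hat A-A)B+A(\hat B-B)+(\hat A-A)(\hat B-B).$$
Off the bad events, submultiplicativity of the norm gives the bound $KC_A\varepsilon_G+KC_B\varepsilon_G+C_AC_B\varepsilon_G^2$. Because $\varepsilon_G\to0$, we have $\varepsilon_G\le1$ for large $G$, so this is at most $(KC_A+KC_B+C_AC_B)\varepsilon_G$. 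Finitely many small $G$ do not affect an $o(G^{-1})$ statement. The population product is bounded by $K^2$. The composition involves only finitely many operations, so after finitely many steps we get a single constant $C$ and a finite union of $o(G^{-1})$ events. This proves the first claim.

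For the inverse, I apply the first part to obtain $Pr(\|\hat\rho_G-\rho_G\|>C\varepsilon_G)=o(G^{-1})$. Let $s>0$ be a uniform lower bound on the singular values of $\rho_G$, so that $\|\rho_G^{-1}\|\le 1/s$ in operator norm; any other fixed norm changes only constants. On the good event and for $G$ large, $C\varepsilon_G\le s/2$. Weyl's inequality then shows $\hat\rho_G$ is invertible with smallest singular value at least $s/2$, so $\|\hat\rho_G^{-1}\|\le 2/s$. The identity
$$\hat\rho_G^{-1}-\rho_G^{-1}=\hat\rho_G^{-1}(\rho_G-\hat\rho_G)\rho_G^{-1}$$
then gives $\|\hat\rho_G^{-1}-\rho_G^{-1}\|\le (2/s^2)C\varepsilon_G$ on the good event. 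This gives the second claim with constant $2C/s^2$. On the bad event $\hat\rho_G$ may be singular, but that event has probability $o(G^{-1})$, so the definition of $\hat\rho_G^{-1}$ there does not matter.

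The main obstacle is bookkeeping rather than any single inequality. Two things need care. First, the intermediate population quantities must stay bounded through the composition, which follows by the same induction. Second, the quadratic error term must be absorbed using $\varepsilon_G\to0$; this is why the lemma requires a diminishing sequence. I will also state the norm convention explicitly, and note that all norms on the fixed finite-dimensional spaces are equivalent, so constants may be adjusted freely.
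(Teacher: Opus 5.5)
Your proposal is correct and follows essentially the same route as the paper. For the first claim, the paper reduces to the single step $\hat\vartheta_{1,G}\hat\vartheta_{2,G}+\hat\vartheta_{3,G}$ and uses the same three-term product decomposition with a union bound; your structural induction just makes the iteration and the boundedness of the intermediate population values explicit. For the inverse, the paper cites the perturbation bound of Horn and Johnson (Eq.~(5.8.6)), which you effectively rederive via Weyl's inequality and the identity $\hat\rho_G^{-1}-\rho_G^{-1}=\hat\rho_G^{-1}(\rho_G-\hat\rho_G)\rho_G^{-1}$, arriving at the same $2\|\rho_G^{-1}\|^2C\varepsilon_G$-type bound (your version also carries the constant $C$ from the first part, which the paper's write-up drops).
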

\begin{proof}
For the first result, it suffices to show that $Pr(\|(\hat\vartheta_{1,G}\hat\vartheta_{2,G}+\hat\vartheta_{3,G})-(\vartheta_{1,G}\vartheta_{2,G}+\vartheta_{3,G})\|>\varepsilon_G\mid\mathbf{X}=\mathbf{x})=o(G^{-1})$ for some constant $C$.
Since
\begin{align*}
\|(\hat\vartheta_{1,G}\hat\vartheta_{2,G}+\hat\vartheta_{3,G})-(\vartheta_{1,G}\vartheta_{2,G}+\vartheta_{3,G})\|
&\leq
\|\hat\vartheta_{1,G}-\vartheta_{1,G}\|\|\hat\vartheta_{2,G}-\vartheta_{2,G}\|
+
\|\vartheta_{1,G}\|\|\hat\vartheta_{2,G}-\vartheta_{2,G}\|
\\&\quad +
\|\hat\vartheta_{1,G}-\vartheta_{1,G}\|\|\vartheta_{2,G}\|
+
\|\hat\vartheta_{3,G}-\vartheta_{3,G}\|,
\end{align*}
we have
\begin{align*}
&
Pr(\|(\hat\vartheta_{1,G}\hat\vartheta_{2,G}+\hat\vartheta_{3,G})-(\vartheta_{1,G}\vartheta_{2,G}+\vartheta_{3,G})\|>\varepsilon_G^2+(\|\vartheta_{1,G}\|+\|\vartheta_{2,G}\|)\varepsilon_G+\varepsilon_G\mid\mathbf{X}=\mathbf{x})
\\
&\leq
Pr(\|\hat\vartheta_{1,G}-\vartheta_{1,G}\|\|\hat\vartheta_{2,G}-\vartheta_{2,G}\|>\varepsilon_G^2\mid\mathbf{X}=\mathbf{x})
\\&\quad+
Pr(\|\vartheta_{1,G}\|\|\hat\vartheta_{2,G}-\vartheta_{2,G}\|>\|\vartheta_{1,G}\|\varepsilon_G\mid\mathbf{X}=\mathbf{x})
+
Pr(\|\hat\vartheta_{1,G}-\vartheta_{1,G}\|\|\vartheta_{2,G}\|>\|\vartheta_{2,G}\|\varepsilon_G\mid\mathbf{X}=\mathbf{x})
\\&\quad+
Pr(\|\hat\vartheta_{3,G}-\vartheta_{3,G}\|>\varepsilon_G\mid\mathbf{X}=\mathbf{x})
\\
&\leq
Pr(\|\hat\vartheta_{1,G}-\vartheta_{1,G}\|>\varepsilon_G\mid\mathbf{X}=\mathbf{x})
+
Pr(\|\hat\vartheta_{2,G}-\vartheta_{2,G}\|>\varepsilon_G\mid\mathbf{X}=\mathbf{x})
\\&\quad+
Pr(\|\hat\vartheta_{2,G}-\vartheta_{2,G}\|>\varepsilon_G\mid\mathbf{X}=\mathbf{x})
+
Pr(\|\hat\vartheta_{1,G}-\vartheta_{1,G}\|>\varepsilon_G\mid\mathbf{X}=\mathbf{x})
\\&\quad+
Pr(\|\hat\vartheta_{3,G}-\vartheta_{3,G}\|>\varepsilon_G\mid\mathbf{X}=\mathbf{x})
\\
&=o(G^{-1}).
\end{align*}
By taking $C$ such that $C>\varepsilon_G+(\|\vartheta_{1,G}\|+\|\vartheta_{2,G}\|)+1$, the first result holds. Now we will show the second result about the inverse as follows. By \citet[Eq.(5.8.6)]{horn2012matrix}, we have
$$
\left\|\hat\rho_{G}^{-1}-\rho_{G}^{-1}\right\|
\leq
\|\rho_{G}^{-1}\|
\frac{\|\rho_{G}^{-1}\|\|\hat\rho_{G}-\rho_{G}\|}{1-\|\rho_{G}^{-1}\|\|\hat\rho_{G}-\rho_{G}\|}
$$
as long as $\|\rho_{G}^{-1}\|\|\hat\rho_{G}-\rho_{G}\|<1$. When $\varepsilon_G\leq 0.5\|\rho_{G}^{-1}\|^{-1}$, we have
$$
\|\hat\rho_{G}-\rho_{G}\|<\varepsilon_G
\implies
\left\|\hat\rho_{G}^{-1}-\rho_{G}^{-1}\right\|
\leq
\|\rho_{G}^{-1}\|
\frac{\|\rho_{G}^{-1}\|\varepsilon_G}{1-\|\rho_{G}^{-1}\|\varepsilon_G}\leq
2\|\rho_{G}^{-1}\|^2\varepsilon_G.
$$
Therefore, the second result holds.
\end{proof}

\begin{lem}\label{assn:assumption_estimated_q2_rate}
Suppose the assumptions in Theorem \ref{thm:theorem1} as well as the null hypothesis $H_0$. For every $z>0$, there is some sequence $\{\varepsilon_G\}$ such that $\varepsilon_G=o(1)$ and that $Pr(|\hat q_2(z)-q_2(z)|>\varepsilon_G\mid\mathbf{X}=\mathbf{x})=o(G^{-1})$.
\end{lem}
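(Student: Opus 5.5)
Since $q_2(z)$ is a fixed polynomial in $(\mu_{1,2},\mu_{2,2},\mu_{1,1,1},\mu_{1,1,1,1})$ (through $k_1,\dots,k_4$ and $\nu_1,\dots,\nu_4$), and $\hat q_2(z)$ is the same polynomial evaluated at the sample analogs, the plan is to reduce the claim to the individual moment estimators via Lemma \ref{lemma:polynomials_epsilon_G_convergence}. Concretely, I will show that for a common sequence $\varepsilon_G=o(1)$ with $G\varepsilon_G^4\to\infty$ (say $\varepsilon_G=G^{-1/8}$), each of $\hat\mu_{1,2}$, $\hat\mu_{2,2}$, $\hat\mu_{1,1,1}$, $\hat\mu_{1,1,1,1}$ lies within $C\varepsilon_G$ of its population counterpart except on an event of probability $o(G^{-1})$. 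Under Assumption \ref{assumption:moments_and_invertibility}, the population moments and $\Gamma$ are bounded uniformly in $G$, so Lemma \ref{lemma:polynomials_epsilon_G_convergence} (first part, with $z$ fixed) then yields the statement with $\varepsilon_G$ replaced by $C\varepsilon_G$, which is still $o(1)$.

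To handle each moment estimator, I will first write $\hat\omega_{1g}$ and $\hat\omega_{2g}$ in terms of the true errors. Using $\hat u_g=u_g-X_g(\hat\beta-\beta)$ and $\hat\beta-\beta=\Pi\frac1G\sum_g X_g'u_g$, we get
\[
\hat\omega_{1g}=(\sigma/\hat\sigma)\bigl(\omega_{1g}-\sigma^{-1}\lambda'\Pi x_g'x_g(\hat\beta-\beta)\bigr),
\]
and similarly for $\hat\omega_{2g}$. Expanding $\hat\omega_{1g}^4$, say, gives a finite sum of terms of the form
\[
(\sigma/\hat\sigma)^4\cdot\Bigl(\frac1G\sum_g a_g(u_g)\Bigr)\cdot\text{(powers of the entries of }\hat\beta-\beta),
\]
where each $a_g$ is a product of up to four factors $\lambda'\Pi x_g'u_g$ and deterministic bounded matrices $x_g'x_g$, $\Pi$. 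Every estimator $\hat\mu$ is therefore a finite composition of additions and multiplications of the following ingredients:
\begin{enumerate}[label=(\alph*)]
\item the sample averages $\frac1G\sum_g a_g(u_g)$;
\item $\hat\beta-\beta$, whose population counterpart is $0$;
\item $\sigma/\hat\sigma$, whose counterpart is $1$.
\end{enumerate}

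Each ingredient can be controlled at probability $o(G^{-1})$. For the sample averages in (a), I apply Lemma \ref{lemma:fourth_moment_markov} to $\zeta_g=a_g(u_g)$. This requires bounded average fourth moments of quartic functions of $X_g'u_g$, i.e.\ of $E[\|X_g'u_g\|^{16}\mid\mathbf X=\mathbf x]$ together with the bound on $\|x_g'x_g\|$; this is exactly where the 16th-moment hypothesis of Theorem \ref{thm:theorem1} enters. For (b), $\hat\beta-\beta$ is itself a sample mean of the mean-zero $\Pi x_g'u_g$, so Lemma \ref{lemma:fourth_moment_markov} applies directly. For (c), I use Lemma \ref{lemma:variace_quadratic}: the quantities $\frac1G\sum\omega_{2g}$ and $\frac1G\sum\omega_{3g}$ are sample means of independent mean-zero variables (with $\omega_{3g}$ needing eighth moments, which are also covered), so $\hat\sigma^2/\sigma^2$ is within $C\varepsilon_G$ of $1$ with probability $1-o(G^{-1})$. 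Since $\sigma^2$ is bounded away from zero, the second part of Lemma \ref{lemma:polynomials_epsilon_G_convergence} (for inverses), combined with $|\sqrt{a}-1|\le|a-1|$, transfers this bound to $\sigma/\hat\sigma$.

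The main obstacle is the bookkeeping in the expansion of $\hat\omega_{1g}^4$ and $\hat\omega_{2g}'\Gamma\hat\omega_{2g}$. The key point is to check that every cross-product average has a bounded population counterpart and a bounded average fourth central moment, so that the conditions of Lemma \ref{lemma:fourth_moment_markov} hold. The worst case is $\frac1G\sum_g\omega_{1g}^4$, which needs $E[\omega_{1g}^{16}]$ bounded. Mixed terms such as $\frac1G\sum_g\omega_{1g}^3(\lambda'\Pi x_g'x_g)$ need lower moments, and these follow by Hölder's inequality. A secondary point is that Lemma \ref{lemma:polynomials_epsilon_G_convergence} is stated for a fixed finite composition, so the number of terms must not grow with $G$. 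It does not: the expansion is a fixed polynomial in the averages and in $\hat\beta-\beta$.
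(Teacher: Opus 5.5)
Your proposal is correct and follows essentially the paper's route. You reduce to the four moment estimators via Lemma \ref{lemma:polynomials_epsilon_G_convergence}, express $\hat\omega_{1g},\hat\omega_{2g}$ through $\omega_{1g},\omega_{2g}$, $\hat\sigma/\sigma$ and $\hat\beta-\beta$, and control every sample average with Lemma \ref{lemma:fourth_moment_markov} using the 16th-moment bound. The only difference is bookkeeping: you expand $\hat\omega_{1g}^4$ and the other powers exactly into a fixed polynomial of sample averages, which fits Lemma \ref{lemma:polynomials_epsilon_G_convergence} directly, whereas the paper splits $\hat\mu-\mu$ into a centered sample mean plus terms in $\hat\omega_g-\omega_g$ and bounds the latter by norm inequalities.
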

\begin{proof}
By Lemma \ref{lemma:polynomials_epsilon_G_convergence} and the Cram\'er-Edgeworth expansion in \eqref{eq:edgeworth_expansion}, it suffices to investigate $\hat\mu_{1,2}-\mu_{1,2}$, $\hat\mu_{2,2}-\mu_{2,2}$, $\hat\mu_{1,1,1}-\mu_{1,1,1}$, and  $\hat\mu_{1,1,1,1}-\mu_{1,1,1,1}$. Note that
\begin{align*}
\hat\mu_{1,2}-\mu_{1,2}
&=
\frac{1}{G}\sum_{g}((\hat\omega_{1g}-\omega_{1g})\omega_{2g}+\omega_{1g}(\hat\omega_{2g}-\omega_{2g})+(\hat\omega_{1g}-\omega_{1g})(\hat\omega_{2g}-\omega_{2g}))\\&+\frac{1}{G}\sum_{g}(\omega_{1g}\omega_{2g}-E[\omega_{1g}\omega_{2g}\mid\mathbf{X}=\mathbf{x}])
\\
\hat\mu_{2,2}-\mu_{2,2}
&=
\frac{1}{G}\sum_{g}((\hat\omega_{2g}-\omega_{2g})'\Gamma\omega_{2g}+\omega_{2g}'\Gamma(\hat\omega_{2g}-\omega_{2g})+(\hat\omega_{2g}-\omega_{2g})'\Gamma(\hat\omega_{2g}-\omega_{2g}))\\&+\frac{1}{G}\sum_{g}(\omega_{2g}'\Gamma\omega_{2g}-E[\omega_{2g}'\Gamma\omega_{2g}\mid\mathbf{X}=\mathbf{x}]),
\\
\hat\mu_{1,1,1}-\mu_{1,1,1}
&=
\frac{1}{G}\sum_{g}(\hat\omega_{1g}-\omega_{1g})((\hat\omega_{1g}-\omega_{1g})^2+3(\hat\omega_{1g}-\omega_{1g})\omega_{1g}+3\omega_{1g}^2)\\&+\frac{1}{G}\sum_{g}(\omega_{1g}^3-E[\omega_{1g}^3\mid\mathbf{X}=\mathbf{x}]),
\\
\hat\mu_{1,1,1,1}-\mu_{1,1,1,1}
&=
\frac{1}{G}\sum_{g}(\hat\omega_{1g}-\omega_{1g})(4\omega_{1g}^3+ 6\omega_{1g}^2(\hat\omega_{1g}-\omega_{1g}) + 4\omega_{1g}(\hat\omega_{1g}-\omega_{1g})^2 + (\hat\omega_{1g}-\omega_{1g})^3)\\&+\frac{1}{G}\sum_{g}(\omega_{1g}^4-E[\omega_{1g}^4\mid\mathbf{X}=\mathbf{x}]),
\end{align*}
\begin{align*}
\hat\omega_{1g}-\omega_{1g}&=((\hat\sigma/\sigma)^{-1}-1)\omega_{1g}-(\hat\sigma/\sigma)^{-1}\sigma^{-1}\lambda'\Pi X_g'X_g(\hat\beta-\beta),
\\
\hat\omega_{2g}-\omega_{2g}&=((\hat\sigma/\sigma)^{-1}-1)\omega_{2g}
-
(\hat\sigma/\sigma)^{-1}\sigma^{-1}\left(\begin{array}{c}I_k\\ X_g'X_g\Pi'\lambda\lambda'\end{array}\right)\Pi X_g'X_g(\hat\beta-\beta).
\end{align*}
Each term in the above expressions is bounded as follows:
$$
\|\hat\beta-\beta\|\leq\left\|\Pi\right\|
\left\|\frac{1}{G}\sum_{g=1}^GX_g' u_g\right\|,
$$
\begin{align*}
|\hat\omega_{1g}-\omega_{1g}|&\leq|(\hat\sigma/\sigma)^{-1}-1||\omega_{1g}|\\&+\|\hat\beta-\beta\||\hat\sigma/\sigma|^{-1}\sigma^{-1}\|\lambda\|\|\Pi\|\|X_g'X_g\|,
\\
\|\hat\omega_{2g}-\omega_{2g}\|&\leq|(\hat\sigma/\sigma)^{-1}-1|\|\omega_{2g}\|\\&
+
\|\hat\beta-\beta\||\hat\sigma/\sigma|^{-1}\sigma^{-1}\|\Pi\|\|X_g'X_g\|\\&+\|\hat\beta-\beta\||\hat\sigma/\sigma|^{-1}\sigma^{-1}\|\lambda\|^2\|\Pi\|^2\|X_g'X_g\|^2,
\end{align*}
\begin{align*}
|\omega_{1g}|&\leq\sigma^{-1}\|\lambda\|\|\Pi\|\|X_g'u_g\|,\mbox{ and }\\
\|\omega_{2g}\|&\leq\sigma^{-1}\|\Pi\|\|X_g'u_g\|+\sigma^{-1}\|\lambda\|^2\|\Pi\|^2\|X_g'X_g\|\|X_g'u_g\|.
\end{align*}
By Lemma \ref{lemma:fourth_moment_markov}-\ref{lemma:polynomials_epsilon_G_convergence} and Assumption \ref{assumption:moments_and_invertibility}, we have the statement of this lemma.
\end{proof}

\begin{proof}[Proof of Theorem \ref{thm:theorem1}]
This proof essentially follows the proof of  \citet[Theorem 1]{hall1983inverting}, but it makes sure each step does not use the identical distribution assumption in \cite{hall1983inverting}.
Let $z=\Phi^{-1}(1-\alpha/2)$. Using the sequence $\varepsilon_G$ in Lemma \ref{assn:assumption_estimated_q2_rate}, we have
\begin{align}
Pr(|t|\leq z -  G^{-1}\hat q_2(z) \mid\mathbf{X}=\mathbf{x})
&=
Pr(|t|\leq z -  G^{-1}\hat q_2(z) , |\hat q_2(z)-q_2(z)|>\varepsilon_G\mid\mathbf{X}=\mathbf{x})\nonumber
\\&\quad+
Pr(|t|\leq z -  G^{-1}\hat q_2(z) , |\hat q_2(z)-q_2(z)|\leq\varepsilon_G\mid\mathbf{X}=\mathbf{x})\nonumber
\\
&\leq
Pr(|\hat q_2(z)-q_2(z)|>\varepsilon_G\mid\mathbf{X}=\mathbf{x})\nonumber
\\&\quad+
Pr(|t|\leq z -  G^{-1}q_2(z)+G^{-1}\varepsilon_G\mid\mathbf{X}=\mathbf{x})\nonumber
\\
&=
o(G^{-1})
+
Pr(|t|\leq z -  G^{-1}q_2(z)+G^{-1}\varepsilon_G\mid\mathbf{X}=\mathbf{x}).\label{eq:proof_Theorem1_eq1}
\end{align}
Similarly, we have
\begin{equation}
Pr(|t|\leq z -  G^{-1}\hat q_2(z) \mid\mathbf{X}=\mathbf{x})
\geq
o(G^{-1})
+
Pr(|t|\leq z -  G^{-1}q_2(z)-G^{-1}\varepsilon_G\mid\mathbf{X}=\mathbf{x}).\label{eq:proof_Theorem1_eq2}
\end{equation}
By the Cram\'er-Edgeworth expansion in \eqref{eq:edgeworth_expansion}, we have
\begin{align*}
&
Pr(|t|\leq z -  G^{-1}q_2(z)+G^{-1}\varepsilon_G\mid\mathbf{X}=\mathbf{x})
\\
&=
2 \Phi(z -  G^{-1}q_2(z)+G^{-1}\varepsilon_G) - 1 \\&\quad+2G^{-1}q_2(z -  G^{-1}q_2(z)+G^{-1}\varepsilon_G)\phi(z -  G^{-1}q_2(z)+G^{-1}\varepsilon_G)\\&\quad+o(G^{-1})
\\
&
Pr(|t|\leq z -  G^{-1}q_2(z)-G^{-1}\varepsilon_G\mid\mathbf{X}=\mathbf{x})
\\
&=
2 \Phi(z -  G^{-1}q_2(z)-G^{-1}\varepsilon_G) - 1 \\&\quad+2G^{-1}q_2(z -  G^{-1}q_2(z)-G^{-1}\varepsilon_G)\phi(z -  G^{-1}q_2(z)-G^{-1}\varepsilon_G)\\&\quad+o(G^{-1}).
\end{align*}
Since $\Phi$, $\phi$, and $q_2$ are continuously differentiable and $\varepsilon_G=o(1)$, we have
\begin{align*}
Pr(|t|\leq z -  G^{-1}q_2(z)+G^{-1}\varepsilon_G\mid\mathbf{X}=\mathbf{x})
&=
2 \Phi(z) - 1 +o(G^{-1})
\\
Pr(|t|\leq z -  G^{-1}q_2(z)-G^{-1}\varepsilon_G\mid\mathbf{X}=\mathbf{x})
&=
2 \Phi(z) - 1 +o(G^{-1}).
\end{align*}
Together with Eq.\eqref{eq:proof_Theorem1_eq1}-\eqref{eq:proof_Theorem1_eq2}, we have
$$
Pr(|t|\leq z -  G^{-1}\hat q_2(z) \mid\mathbf{X}=\mathbf{x})
-
(1-\alpha)
=
o(G^{-1}),
$$
which implies the statement of this theorem.
\end{proof}

\subsection{Proof of Theorem \ref{thm:theorem3}}\label{sec:proof_theorem3}
The proof structure is similar to \citet[Theorem 2]{bhattacharya1978validity}, but we extend it to the triangular array $\{\eta_g\}$ with potentially non-identical distributions.
Define $\mathbb{F}(z)=\Phi(z)+G^{-1/2} q_1(z)\phi(z)+G^{-1}q_2(z)\phi(z)$, where
\begin{align*}
q_1(z)&=-(k_1+\frac{1}{6}k_3He_2(z)),\\
q_2(z)&=-\left(\frac{1}{2}(k_2+k_1^2)He_1(z)+\frac{1}{24}(k_4+4k_1k_3)He_3(z)+\frac{1}{72}k_3^2He_5(z)\right).
\end{align*}
Let $\mathcal{N}$ be any bounded neighborhood of $\Phi^{-1}(1-\alpha/2)$.
In this proof, we are going to show
$$
\sup_{z\in\mathcal{N}}\left|Pr(t\leq z\mid\mathbf{X}=\mathbf{x})-\mathbb{F}(z)\right|=o(G^{-1})
$$
under the null. (We can show similarly that $\sup_{z\in\mathcal{N}}\left|Pr(t\geq -z\mid\mathbf{X}=\mathbf{x})-\mathbb{F}(-z)\right|=o(G^{-1})$ under the null.)
Since $(\omega_{1g},\omega_{2g}',\omega_{3g})'$ is a function of $\eta_g$, there is a function $\tilde{H}$ such that
$$
t
=
\tilde{H}\left(\mathbf{V}_G^{-1/2}\frac{1}{\sqrt{G}}\sum_{g=1}^G\eta_g;G^{-1/2},\gamma_1\right)
$$
with
\[
\mathbf V_G=
\frac{1}{G}\sum_{g=1}^G
Var(\eta_g\mid\mathbf{X}=\mathbf{x}),
\]
under the null hypothesis that
$$
\tilde{H}(u;s,\gamma_1)=u_1(1+s\tilde{h}_1(u)+s^2\tilde{h}_2(u))^{-1/2}
$$
for a linear function $\tilde{h}_1(u)$ and a quadratic function $\tilde{h}_2(u)$, where a finite-dimensional vector $\gamma_1$ collects the coefficients in $\tilde{h}_1(u)$ and $\tilde{h}_2(u)$.

In the following two lemmas, we show the Cram\'er condition and the Cram\'er-Edgeworth expansion for $\frac{1}{\sqrt{G}}\sum_{g=1}^G\eta_g$.

\begin{lem}\label{lem:cramer's_condition1}
$\limsup_{g\rightarrow\infty}\sup_{\|a\|\geq b}\left|E\left[\exp\left(ia'\eta_g\right)\mid\mathbf{X}=\mathbf{x}\right]\right|<1$ for every positive $b$.
\end{lem}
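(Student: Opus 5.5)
We prove the uniform Cramér condition from conditions (i)--(iii) of Theorem \ref{thm:theorem3}, mainly (ii), the uniformly bounded density of $a'\eta_g$. Write $a=\|a\|e$ with $\|e\|=1$ and $s=\|a\|\geq b$. Then $E[\exp(ia'\eta_g)\mid\mathbf{X}=\mathbf{x}]=\psi_{g,e}(s)$, where $\psi_{g,e}$ is the characteristic function of the scalar variable $e'\eta_g$. This variable has a density $f_{g,e}$ with $\sup_{g,e}\|f_{g,e}\|_\infty\leq M$. So it suffices to show
\[
\sup_{g}\ \sup_{\|e\|=1}\ \sup_{s\geq b}|\psi_{g,e}(s)|<1 .
\]
This makes the statement hold for every $g$, not only in the $\limsup$.

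The key step is a quantitative bound. A distribution with bounded density and controlled spread cannot have a characteristic function close to one in modulus away from the origin. Suppose $|\psi_{g,e}(s)|\geq 1-\delta$. Then there is a $\theta$ with $E[1-\cos(s\,e'\eta_g-\theta)]\leq\delta$. So most of the mass of $e'\eta_g$ lies within a small neighborhood of the lattice $\{(\theta+2\pi m)/s:m\in\mathbb{Z}\}$. Concretely, for $\tau\in(0,\pi)$ the set $\{x:\ |sx-\theta-2\pi m|>\tau\ \forall m\}$ has probability at most $\delta/(1-\cos\tau)$.

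Next we bound the mass near the lattice using (iii) and (ii). By Chebyshev with the uniformly bounded second moments, $Pr(|e'\eta_g|>R)\leq C/R^2$ uniformly in $(g,e)$. The interval $[-R,R]$ contains at most $sR/\pi+1$ lattice points, and each neighborhood has length $2\tau/s$. By the density bound, the mass near the lattice within $[-R,R]$ is at most $M(sR/\pi+1)2\tau/s\leq M(2R\tau/\pi+2\tau/b)$. Combining,
\[
1\leq \frac{\delta}{1-\cos\tau}+\frac{C}{R^2}+M\Big(\frac{2R\tau}{\pi}+\frac{2\tau}{b}\Big).
\]
We choose $R$ large, then $\tau$ small (depending on $M,b,R$), so that the last two terms are below $1/2$. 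This forces $\delta\geq(1-\cos\tau)/2>0$, a constant independent of $(g,e,s)$. Therefore $\sup|\psi_{g,e}(s)|\leq 1-(1-\cos\tau)/2<1$, which gives the lemma. The case of small $s$ near $b$ is covered because the lattice-count bound used $s\geq b$.

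The main obstacle is getting uniformity over $e$ and $g$ at once. The usual Riemann--Lebesgue argument gives $|\psi(s)|<1$ pointwise but says nothing uniform. The counting argument above replaces it and uses only the uniform density bound and uniform second moments, so no compactness or equicontinuity argument over the sphere is needed. We also need to check that condition (ii) really holds for all unit $e$ in the reduced coordinates of $\eta_g$ from Remark \ref{remark:removal}. This is exactly how (ii) is stated, so no further work is required.
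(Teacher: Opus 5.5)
Your argument is correct, but it takes a different route from the paper. The paper's proof is a one-line citation of Theorem 1 of Bobkov, Chistyakov and G\"otze (2012). For a variable with bounded density, that theorem gives $|E[\exp(ia'\eta_g)\mid\mathbf{X}=\mathbf{x}]|\leq 1-c_1\min\{1/Var(e'\eta_g\mid\mathbf{X}=\mathbf{x}),\|a\|^2\}$ with $e=a/\|a\|$ and $c_1$ depending on the uniform density bound. The paper then only needs a uniform control of that variance.

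You instead prove a uniform bound from scratch. If $|\psi_{g,e}(s)|$ is close to one, then $e'\eta_g$ must concentrate near the lattice $(\theta+2\pi\mathbb{Z})/s$. The density bound together with a Chebyshev tail bound rules this out uniformly in $(g,e,s\geq b)$. The mechanism is the same as in the cited result: bounded density plus bounded spread prevents lattice concentration.

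Your version is self-contained and makes explicit which hypotheses matter: (ii), and an \emph{upper} bound on second moments from (iii). Condition (i) plays no role. This matters because the paper's proof invokes a positive \emph{lower} bound $c_2$ on the eigenvalues of $Var(\eta_g\mid\mathbf{X}=\mathbf{x})$, which goes the wrong way. From $Var\geq c_2$ one gets $1/Var\leq 1/c_2$, which cannot bound $1-c_1\min\{1/Var,\|a\|^2\}$ from above. What is needed is $Var(e'\eta_g\mid\mathbf{X}=\mathbf{x})\leq c_2$, which (iii) supplies, and that is exactly the ingredient you use.

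What the citation buys is an explicit dependence on $\|a\|$, the density bound and the variance. Your constant $1-(1-\cos\tau)/2$ is only implicit through the choice of $R$ and then $\tau$, but the lemma needs nothing more than a uniform constant.

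One small bookkeeping fix: neighborhoods centered slightly outside $[-R,R]$ can still meet it. The count of intervals meeting $[-R,R]$ is at most $sR/\pi+2$, so the last term should read $M(2R\tau/\pi+4\tau/b)$. This changes only constants and the choice of $\tau$.
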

\begin{proof}
By  \citet[Theorem 1]{bobkov2012bounds}, there is a positive constant $c_1$ such that $\left|E\left[\exp\left(ia'\eta_g\right)\mid\mathbf{X}=\mathbf{x}\right]\right|\leq 1-c_1\min\{1/Var(a'\eta_g/\|a\|\mid\mathbf{X}=\mathbf{x}),\|a\|^2\}$. Let $c_2$ be a positive lower bound for the eigenvalues of $Var(\eta_g\mid\mathbf{X}=\mathbf{x})$. Then 
$\left|E\left[\exp\left(ia'\eta_g\right)\mid\mathbf{X}=\mathbf{x}\right]\right|\leq 1-c_1\min\{1/c_2,\|a\|^2\}$. Now we have 
$$
\limsup_{g\rightarrow\infty}\sup_{\|a\|\geq b}\left|E\left[\exp\left(ia'\eta_g\right)\mid\mathbf{X}=\mathbf{x}\right]\right|\leq 1-c_1\min\{1/c_2,b^2\}<1.
$$
\end{proof}

\begin{lem}\label{lem:BRtheorem20.6-1}
Under the null,
$$
\sup_{z\in\mathcal{N}}\left|Pr(t\leq z\mid\mathbf{X}=\mathbf{x})-\int_{\{\tilde{H}(u;G^{-1/2},\gamma_1)\leq z\}}\sum_{r=0}^2G^{-r/2}(\tilde{P}_r(-D;\gamma_2)\phi_I)(u)du\right|=o(G^{-1}),
$$
where $\phi_I$ is the multidimensional standard normal pdf, $\gamma_2$ is the collection of average cumulants up to fourth order, and $\tilde{P}_r$ is the polynomial defined in \citet[Section 7]{bhattacharya2010normal}.
\end{lem}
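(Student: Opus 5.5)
Since $t=\tilde H(S_G;G^{-1/2},\gamma_1)$ with $S_G=\mathbf V_G^{-1/2}G^{-1/2}\sum_{g=1}^G\eta_g$, I would reduce the statement to a multivariate Edgeworth expansion for the normalized sum $S_G$, valid uniformly over a suitable class of Borel sets, and then apply it to the sets $A_z=\{u:\tilde H(u;G^{-1/2},\gamma_1)\le z\}$, $z\in\mathcal N$. The route follows \citet[Theorem 20.6]{bhattacharya2010normal}, adapted to independent but non-identically distributed summands. First I would check the moment and nondegeneracy inputs. By condition (i) of Theorem \ref{thm:theorem3}, the smallest eigenvalue of $\mathbf V_G$ is bounded away from zero, so $\mathbf V_G^{-1/2}$ is uniformly bounded. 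By condition (iii), the average absolute moments $\frac1G\sum_g E[\|\mathbf V_G^{-1/2}\eta_g\|^{s}\mid\mathbf X=\mathbf x]$ are bounded for $s=4$ and indeed for every $s$. These are exactly the hypotheses of the non-i.i.d.\ expansion \citep[Theorem 20.6 and its remarks]{bhattacharya2010normal}. The only remaining hypothesis is an averaged Cram\'er condition, which Lemma \ref{lem:cramer's_condition1} supplies in the form $\sup_{\|a\|\ge b}|E[e^{ia'\eta_g}]|\le 1-\delta_b<1$ uniformly in large $g$.

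The main step, and the main obstacle, is proving the expansion
\[
\sup_{A\in\mathcal A}\Big|Pr(S_G\in A\mid\mathbf X=\mathbf x)-\int_A\sum_{r=0}^2G^{-r/2}(\tilde P_r(-D;\gamma_2)\phi_I)(u)\,du\Big|=o(G^{-1})
\]
over a class $\mathcal A$ of Borel sets with $\Phi$-small boundaries. I would use the usual smoothing argument. Convolve with a kernel $K_\epsilon$ whose Fourier transform is supported in $\|a\|\le \epsilon^{-1}$, and take $\epsilon=G^{-1}\log G$ or a smaller power. Then split the characteristic-function difference into two regions. In the region $\|a\|\le c\sqrt G$, the standard Taylor expansion of $\prod_g E[e^{ia'\eta_g/\sqrt G}]$ gives the cumulant polynomials $\tilde P_r$ built from the average cumulants $\gamma_2$, with an $o(G^{-1})$ remainder by \citet[Theorem 9.11]{bhattacharya2010normal}, whose proof uses only averaged moments. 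In the region $c\sqrt G<\|a\|\le\epsilon^{-1}$, I bound the product by $(1-\delta_c)^{G}$ using Lemma \ref{lem:cramer's_condition1} and independence. This decays exponentially and beats the polynomial volume factor $\epsilon^{-\dim\eta}$. The delicate points are that the Cram\'er bound must hold uniformly in $g$ rather than for one distribution, and that the finitely many early $g$ excluded by the $\limsup$ can be dropped because $|E[e^{ia'\eta_g}]|\le1$. Under condition (ii) (bounded densities) one could alternatively use integrability of $|E[e^{ia'\eta_g}]|^{m}$ for some fixed $m$.

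Finally, I would verify that the sets $A_z$ belong to $\mathcal A$ uniformly in $z\in\mathcal N$. The boundary $\partial A_z=\{u_1(1+s\tilde h_1(u)+s^2\tilde h_2(u))^{-1/2}=z\}$ is, on the region where the bracket exceeds $1/2$, a smooth hypersurface that depends continuously on $(z,s)$. Hence $\Phi_I((\partial A_z)^{\epsilon})=O(\epsilon)$ uniformly. The complement of that region has Gaussian-type probability $O(G^{-2})$ by moment bounds, since $s\tilde h_1(u)+s^2\tilde h_2(u)$ with $s=G^{-1/2}$ is small unless $\|u\|\gtrsim G^{1/4}$. Using Markov's inequality with high moments, available by (iii), this exceptional event contributes $o(G^{-1})$ to both probabilities. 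Combining the pieces gives the display in Lemma \ref{lem:BRtheorem20.6-1}.
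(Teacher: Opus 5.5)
Your proposal is correct and follows the paper's route. The paper's proof consists only of invoking \citet[Theorem 20.6]{bhattacharya2010normal} for independent, non-identically distributed summands under the Cram\'er condition of Lemma \ref{lem:cramer's_condition1}. Your checks supply details the paper leaves implicit: that the sets $\{u:\tilde{H}(u;G^{-1/2},\gamma_1)\leq z\}$ have boundary $\epsilon$-neighborhoods of $\Phi_I$-measure $O(\epsilon)$ uniformly in $z\in\mathcal{N}$, and that the region where $1+G^{-1/2}\tilde{h}_1(u)+G^{-1}\tilde{h}_2(u)$ is not bounded below is negligible.

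One detail if you re-derive the expansion by smoothing rather than citing it. For such sets the smoothing error is of order $\epsilon$, so you need $\epsilon=o(G^{-1})$ (e.g., $G^{-2}$, or $e^{-dG}$ as in Bhattacharya--Rao), not $\epsilon=G^{-1}\log G$.
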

\begin{proof}
It follows from the Cram\'er condition (Lemma \ref{lem:cramer's_condition1}) and \citet[Theorem 20.6]{bhattacharya2010normal}.
\end{proof}

Define
$$
\xi(u;s,\gamma_2)=\sum_{r=0}^2s^r(\tilde{P}_r(-D;\gamma_2)\phi_I)(u).
$$
Then Lemma \ref{lem:BRtheorem20.6-1} is
$$
\sup_{z\in\mathcal{N}}\left|Pr(t\leq z\mid\mathbf{X}=\mathbf{x})-\int_{\{\tilde{H}(u;G^{-1/2},\gamma_1)\leq z\}}\xi(u;G^{-1/2},\gamma_2)du\right|=o(G^{-1})
$$
under the null. The following lemma shows the integral can be expressed with polynomials of $z$.

\begin{lem}\label{lem:BGlemma2.1}
There are finite-order polynomials $\tilde{q}_1(z)$ and $\tilde{q}_2(z)$ such that
\begin{equation}\label{eq:BG_lemma2.1_eq1}
\sup_{z\in\mathbb{R}}\left|\int_{\{\tilde{H}(u;G^{-1/2},\gamma_1)\leq z\}}\xi(u;G^{-1/2},\gamma_2)du-\tilde{\mathbb{F}}(z)\right|=o(G^{-1}),
\end{equation}
and that
\begin{equation}\label{eq:approx_moment_F_tilde}
\int
\tilde{H}(u;G^{-1/2},\gamma_1)^j\xi(u;G^{-1/2},\gamma_2)(u)du
=
\int z^jd\tilde{\mathbb{F}}(z)+o(G^{-1})
\end{equation}
for every $j$, where $\tilde{\mathbb{F}}(z)=\Phi(z)+G^{-1/2} \tilde{q}_1(z)\phi(z)+G^{-1}\tilde{q}_2(z)\phi(z)$.\footnote{The coefficients of the polynomials $\tilde{q}_1(z)$ and $\tilde{q}_2(z)$ can depend on $G$ because the average cumulants can depend on $G$.}
\end{lem}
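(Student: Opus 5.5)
This follows the delta-method argument of \citet[Lemma 2.1]{bhattacharya1978validity}; the one new point is that the coefficients $\gamma_1,\gamma_2$ may vary with $G$, so every constant must be uniform in $G$. Write $s=G^{-1/2}$. Taylor-expand $\tilde H(u;s,\gamma_1)=u_1(1+s\tilde h_1(u)+s^2\tilde h_2(u))^{-1/2}$ in $s$ around $s=0$:
$$
\tilde H(u;s,\gamma_1)=u_1-\tfrac{s}{2}u_1\tilde h_1(u)+s^2u_1\left(\tfrac38\tilde h_1(u)^2-\tfrac12\tilde h_2(u)\right)+s^3R(u;s),
$$
with $H^{(2)}(u;s)$ the quadratic-in-$s$ polynomial part. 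On the set $A_G=\{\|u\|\le c\sqrt{\log G}\}$, the factor $s\tilde h_1+s^2\tilde h_2$ is $O(G^{-1/2}\log G)$, so the expansion is valid and $|s^3R|=O(G^{-3/2}(\log G)^{2})$ uniformly. Because $\xi$ is a polynomial times $\phi_I$, choosing $c$ large makes $\int_{A_G^c}|\xi|=o(G^{-1})$. For \eqref{eq:approx_moment_F_tilde} the same choice controls $\int_{A_G^c}|\tilde H|^j|\xi|$. I will need $\tilde H$ to have at most polynomial growth where $1+s\tilde h_1+s^2\tilde h_2$ is bounded below. On the region where that quantity is near zero, which lies inside $A_G^c$, I will either redefine $\tilde H$ or restrict to that set. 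This is harmless because the event has probability $o(G^{-1})$ under both $\xi$ and the distribution of $t$.

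Next I replace $\{\tilde H\le z\}$ by $\{H^{(2)}\le z\}$. The error is at most the $|\xi|$-mass of $\{|H^{(2)}-z|\le CG^{-3/2}\log^2G\}\cap A_G$, plus the $o(G^{-1})$ tail term. Since $\partial H^{(2)}/\partial u_1=1+O(s\sqrt{\log G})$ on $A_G$, this slab has width $O(G^{-3/2}\log^2 G)$ in the $u_1$ direction, so its mass is $o(G^{-1})$ uniformly in $z$.

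I then compute the characteristic function, equivalently the cumulants, of $H^{(2)}(U)$ under the signed measure $\xi(u;s,\gamma_2)du$. The moments $\int H^{(2)}(u)^j\xi\,du$ are Gaussian integrals of polynomials. They are polynomial in $s$ with coefficients that are polynomials in $(\gamma_1,\gamma_2)$, and these coefficients stay bounded because Assumption~\ref{assumption:moments_and_invertibility} and condition (iii) of Theorem~\ref{thm:theorem3} bound the average cumulants. Expanding to order $s^2$, the cumulants take the form $\kappa_1=sk_1+O(s^3)$, $\kappa_2=1+s^2k_2+O(s^3)$, $\kappa_3=sk_3+O(s^3)$, $\kappa_4=s^2k_4+O(s^3)$, and higher cumulants are $O(s^3)$. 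I then invert formally: the Edgeworth form of the characteristic function $\exp\left(\sum_r\kappa_r(it)^r/r!\right)$, truncated at order $s^2$, is $e^{-t^2/2}$ times a polynomial in $it$. Inverting term by term gives $\tilde{\mathbb F}(z)=\Phi(z)+s\tilde q_1(z)\phi(z)+s^2\tilde q_2(z)\phi(z)$ with polynomials $\tilde q_1,\tilde q_2$ built from Hermite polynomials; this is exactly the form of $q_1,q_2$ above.

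\textbf{Main obstacle.} Turning this formal identification into the uniform bound \eqref{eq:BG_lemma2.1_eq1} is the hardest step. My first approach would follow Bhattacharya--Ghosh: change variables $v_1=H^{(2)}(u)$, with the other coordinates fixed, on $A_G$. The Jacobian is $1+O(s\|u\|)$ and can be expanded in $s$. Integrating out $v_2,\dots$ gives a density of the form $\phi(v_1)$ times a polynomial in $s$, plus an $o(G^{-1})$ remainder. Integrating this density over $(-\infty,z]$ then gives $\tilde{\mathbb F}(z)+o(G^{-1})$ uniformly in $z$, and comparing moments identifies its coefficients with the cumulant computation above. Equation \eqref{eq:approx_moment_F_tilde} follows in the same way, using the moment version of the slab and tail bounds.
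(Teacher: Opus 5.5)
Your proposal is correct and takes the same route as the paper, whose proof simply defers to \citet[Lemma 2.1]{bhattacharya1978validity}. Your sketch is exactly the content of that citation: truncate to $\|u\|\le c\sqrt{\log G}$, replace $\tilde H$ by its Taylor polynomial in $s=G^{-1/2}$ via a slab bound, change variables to $v_1=H^{(2)}(u)$, and identify the polynomial coefficients through moments, with constants uniform in $G$ because $\gamma_1,\gamma_2$ stay bounded. Your point that $\tilde H$ must be truncated or redefined where $1+s\tilde h_1+s^2\tilde h_2$ is not bounded away from zero is left implicit in the paper but is genuinely needed: $\xi$ charges the region where this quantity is nonpositive, so without such a convention the integrals in \eqref{eq:approx_moment_F_tilde} are not well defined.
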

\begin{proof}
The proof is done by closely following  the one in \citet[Lemma 2.1]{bhattacharya1978validity}. 
\end{proof}

For the rest of the proof, we are going to show $\sup_{z\in\mathcal{N}}\left|\tilde{\mathbb{F}}(z)-\mathbb{F}(z)\right|=o(G^{-1})$.
Although Lemma \ref{lem:BGlemma2.1} does not specify the orders of the polynomials $\tilde{q}_1(z)$ and $\tilde{q}_2(z)$, the orders are finite. As a result, it suffices to show all the moments are different only in the magnitude of $o(G^{-1})$, i.e., $\int z^jd\mathbb{F}(z)=\int z^jd\tilde{\mathbb{F}}(z)+o(G^{-1})$. Define
$$
\tilde{t}
=
\sqrt{G}W_{1}\left(1+\frac{1}{2}W_{2}'\Gamma W_{2}-\frac{1}{2}W_{3}+\frac{3}{8}W_{3}^2\right),
$$
where
$$
\left(\begin{array}{c}W_1\\ W_2\\W_3\end{array}\right)
=
\frac{1}{G}\sum_{g=1}^G\left(\begin{array}{c}\omega_{1g}\\\omega_{2g}\\ \omega_{3g}\end{array}\right).
$$
Since $(\omega_{1g},\omega_{2g}',\omega_{3g})'$ is a function of $\eta_g$, there is a polynomial $H$ such that
$$
\tilde{t}
=
H\left(\mathbf{V}_G^{-1/2}\frac{1}{\sqrt{G}}\sum_{g=1}^G\eta_g;G^{-1/2},\gamma_3\right)
$$
and that $H(u;s,\gamma_3)=u_1(1+sh_1(u)+s^2h_2(u))$ for a linear function $h_1(u)$ and a quadratic function $h_2(u)$, where a finite-dimensional vector $\gamma_3$ collects the coefficients in $h_1(u)$ and $h_2(u)$. For the rest of the proof, we will show each equality of the following expression:
\begin{align*}
\int z^jd\mathbb{F}(z)
&=
E[\tilde{t}^j\mid\mathbf{X}=\mathbf{x}]+o(G^{-1})
\\
&=
\int H(u;G^{-1/2},\gamma_3)^j\xi(u;G^{-1/2},\gamma_2)du+o(G^{-1})
\\
&=
\int \tilde{H}(u;G^{-1/2},\gamma_1)^j\xi(u;G^{-1/2},\gamma_2)du+o(G^{-1})
\\
&=
\int z^jd\tilde{\mathbb{F}}(z)+o(G^{-1}),
\end{align*}
where the first three equalities will be shown in Lemma \ref{lem:F_moments_t_tilde_moments},  \ref{lem:bound_derivative_character}, and \ref{lem:lemma_t_expansion_tilde_t}, and the last equality has been shown in Eq. \eqref{eq:approx_moment_F_tilde}.

We first provide a few lemmas (Lemma \ref{lem:mu_omega_3_disappear}-\ref{lem:cumulants_for_tilde_t}) to characterize the first four moments of $\tilde{t}$, and then we show Lemma \ref{lem:F_moments_t_tilde_moments}-\ref{lem:lemma_t_expansion_tilde_t}.

\begin{lem}\label{lem:mu_omega_3_disappear}
Define
\begin{align*}
\mu_{1,3}
&=
\frac{1}{G}\sum_{g}E\left[
\omega_{1g}\omega_{3g}\mid\mathbf{X}=\mathbf{x}
\right]
\\
\mu_{3,3}
&=
\frac{1}{G}\sum_{g}E\left[
\omega_{3g}^2\mid\mathbf{X}=\mathbf{x}
\right],
\\
\mu_{1,1,3}
&=
\frac{1}{G}\sum_{g}
E\left[
\omega_{1g}^2
\omega_{3g}\mid\mathbf{X}=\mathbf{x}
\right],
\end{align*}
and
$$
\theta
=
\frac{1}{G}\sum_{g=1}^G(\sigma_{g}^2/\sigma^2)^2
=
\frac{1}{G}\sum_{g=1}^G
\left(\sigma^{-2}\lambda'\Pi X_g'E[u_gu_g'\mid\mathbf{X}=\mathbf{x}]X_g\Pi\lambda
\right)^2.
$$
Then $\mu_{1,3}=\mu_{1,1,1}$, $\mu_{3,3}=\mu_{1,1,1,1}-\theta$, and $\mu_{1,1,3}=\mu_{1,1,1,1}-\theta$.
\end{lem}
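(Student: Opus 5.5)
The plan is a direct computation from the definitions. The key observation is that $\omega_{3g}$ is a centered square of $\omega_{1g}$. Write $s_g=\sigma_g^2/\sigma^2$, which is nonrandom given $\mathbf{X}=\mathbf{x}$. Since $\omega_{1g}=\sigma^{-1}\lambda'\Pi X_g'u_g$, the definition of $\omega_{3g}$ gives
$$
\omega_{3g}=\sigma^{-2}(\lambda'\Pi X_g'u_g)^2-\sigma^{-2}\sigma_g^2=\omega_{1g}^2-s_g .
$$
I will also record the following facts, all conditional on $\mathbf{X}=\mathbf{x}$:
\begin{itemize}
\item $E[\omega_{1g}\mid\mathbf{X}=\mathbf{x}]=0$, because $E[u_g\mid X_g]=0$ and $\Pi$, $\lambda$, $\sigma$ are fixed given $\mathbf{X}=\mathbf{x}$.
\item $E[\omega_{1g}^2\mid\mathbf{X}=\mathbf{x}]=s_g$, by the definition of $\sigma_g^2$.
\end{itemize}

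Next I substitute $\omega_{3g}=\omega_{1g}^2-s_g$ into each of the three averages and expand, using linearity of conditional expectation.

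For $\mu_{1,3}$: $E[\omega_{1g}\omega_{3g}\mid\mathbf{X}=\mathbf{x}]=E[\omega_{1g}^3\mid\mathbf{X}=\mathbf{x}]-s_gE[\omega_{1g}\mid\mathbf{X}=\mathbf{x}]=E[\omega_{1g}^3\mid\mathbf{X}=\mathbf{x}]$. Averaging over $g$ gives $\mu_{1,3}=\mu_{1,1,1}$.

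For $\mu_{1,1,3}$: $E[\omega_{1g}^2\omega_{3g}\mid\mathbf{X}=\mathbf{x}]=E[\omega_{1g}^4\mid\mathbf{X}=\mathbf{x}]-s_g\cdot s_g$. Averaging gives $\mu_{1,1,1,1}-\frac1G\sum_g s_g^2=\mu_{1,1,1,1}-\theta$.

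For $\mu_{3,3}$: $E[(\omega_{1g}^2-s_g)^2\mid\mathbf{X}=\mathbf{x}]=E[\omega_{1g}^4\mid\mathbf{X}=\mathbf{x}]-2s_g^2+s_g^2=E[\omega_{1g}^4\mid\mathbf{X}=\mathbf{x}]-s_g^2$. Averaging again gives $\mu_{1,1,1,1}-\theta$.

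Finally, I verify the second expression for $\theta$ in the statement. Because $\Pi$ and $X_g$ are fixed given $\mathbf{X}=\mathbf{x}$,
$$
\sigma_g^2=\lambda'\Pi X_g'E[u_gu_g'\mid\mathbf{X}=\mathbf{x}]X_g\Pi\lambda .
$$
Here $\Pi$ is symmetric, so $\Pi'=\Pi$. Dividing by $\sigma^2$ and squaring yields the displayed formula for $\theta$.

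There is no real obstacle in this argument. The only points needing care are conditioning on $\mathbf{X}=\mathbf{x}$, so that $s_g$, $\Pi$ and $\sigma$ can be pulled out of expectations, and the finiteness of the fourth moments. The latter follows from Assumption \ref{assumption:moments_and_invertibility}(i),(iii) together with the boundedness of $\|\Pi\|$ implied by the eigenvalue bound.
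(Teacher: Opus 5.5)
Your proposal is correct and matches the paper's proof essentially line for line. You rewrite $\omega_{3g}=\omega_{1g}^2-E[\omega_{1g}^2\mid\mathbf{X}=\mathbf{x}]$, expand the three conditional expectations using $E[\omega_{1g}\mid\mathbf{X}=\mathbf{x}]=0$ and $E[\omega_{1g}^2\mid\mathbf{X}=\mathbf{x}]=\sigma_g^2/\sigma^2$, and average over $g$; your check of the second expression for $\theta$ via $\Pi'=\Pi$ is a small step the paper leaves implicit.
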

\begin{proof}
Since $\omega_{1g}=\sigma^{-1}\lambda'\Pi X_g'u_g$ and $\omega_{3g}=\omega_{1g}^2-E[\omega_{1g}^2\mid\mathbf{X}=\mathbf{x}]$,
we have
\begin{align*}
\mu_{1,3}
&=
\frac{1}{G}\sum_{g}E\left[
\omega_{1g}(\omega_{1g}^2-E[\omega_{1g}^2\mid\mathbf{X}=\mathbf{x}])\mid\mathbf{X}=\mathbf{x}
\right]
\\
&=
\frac{1}{G}\sum_{g}E\left[
\omega_{1g}^3\mid\mathbf{X}=\mathbf{x}\right]
-
\frac{1}{G}\sum_{g}E\left[\omega_{1g}\mid\mathbf{X}=\mathbf{x}\right]
E[\omega_{1g}^2\mid\mathbf{X}=\mathbf{x}]
\\
&=
\frac{1}{G}\sum_{g}E\left[
\omega_{1g}^3\mid\mathbf{X}=\mathbf{x}\right]
\\
&=
\mu_{1,1,1}
\\
\mu_{1,1,3}
&=
\frac{1}{G}\sum_{g}
E\left[
\omega_{1g}^2
(\omega_{1g}^2-E[\omega_{1g}^2\mid\mathbf{X}=\mathbf{x}])\mid\mathbf{X}=\mathbf{x}
\right]
\\
&=
\frac{1}{G}\sum_{g}
E\left[
\omega_{1g}^4\mid\mathbf{X}=\mathbf{x}\right]
-
\frac{1}{G}\sum_{g}
E[\omega_{1g}^2\mid\mathbf{X}=\mathbf{x}]^2
\\
&=
\mu_{1,1,1,1}-\theta
\\
\mu_{3,3}
&=
\frac{1}{G}\sum_{g}E\left[
(\omega_{1g}^2-E[\omega_{1g}^2\mid\mathbf{X}=\mathbf{x}])^2\mid\mathbf{X}=\mathbf{x}
\right],
\\
&=
\frac{1}{G}\sum_{g}
E\left[
\omega_{1g}^4\mid\mathbf{X}=\mathbf{x}\right]
-
\frac{1}{G}\sum_{g}
E[\omega_{1g}^2\mid\mathbf{X}=\mathbf{x}]^2
\\
&=
\mu_{1,1,1,1}-\theta.
\end{align*}
\end{proof}

\begin{lem}\label{lem:A.3DNM} (Lemma A.3 of \cite{djogbenou2019asymptotic})
Let $L$ be an integer in $\{2,\ldots,8\}$. For every $g=1,2,...$, suppose we have $L$ random variables $\check{\omega}_{1g},\ldots,\check{\omega}_{Lg}$ with $E\left[\check{\omega}_{\ell g}\mid\mathbf{X}=\mathbf{x}\right]=0$. If $(\check{\omega}_{1g},\ldots,\check{\omega}_{Lg})$'s are independent across $g$ and $E\left[\check{\omega}_{\ell g}^L\mid\mathbf{X}=\mathbf{x}\right]$ is bounded uniformly in $g$, then
$$
E\left[\left(\frac{1}{G}\sum_{g=1}^G\check{\omega}_{1g}\right)\cdots\left(\frac{1}{G}\sum_{g=1}^G\check{\omega}_{Lg}\right)\mid\mathbf{X}=\mathbf{x}\right]
=
\begin{cases}
O(G^{-L/2})&\mbox{ if $L$ is even }\\
O(G^{-(L+1)/2})&\mbox{ otherwise}.
\end{cases}
$$
\end{lem}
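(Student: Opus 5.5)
We expand the product of the $L$ sample means into a sum over index tuples and use independence across $g$ together with the zero-mean property. Writing
$$
E\left[\prod_{\ell=1}^L\left(\frac{1}{G}\sum_{g=1}^G\check{\omega}_{\ell g}\right)\mid\mathbf{X}=\mathbf{x}\right]
=
G^{-L}\sum_{g_1,\ldots,g_L}E\left[\check{\omega}_{1g_1}\cdots\check{\omega}_{Lg_L}\mid\mathbf{X}=\mathbf{x}\right],
$$
we group the tuples $(g_1,\ldots,g_L)$ according to the partition of $\{1,\ldots,L\}$ they induce (positions $\ell,\ell'$ in the same block iff $g_\ell=g_{\ell'}$). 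By independence across $g$, each expectation factorizes over the blocks of the partition. If some block is a singleton $\{\ell\}$, the factor is $E[\check{\omega}_{\ell g}\mid\mathbf{X}=\mathbf{x}]=0$, so the whole term vanishes. Hence only partitions whose blocks all have size at least two contribute.

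Next we bound each contributing term uniformly. For a block $B$ with $|B|=m\le L$, the generalized H\"older inequality gives
$$
\left|E\left[\prod_{\ell\in B}\check{\omega}_{\ell g}\mid\mathbf{X}=\mathbf{x}\right]\right|\le\prod_{\ell\in B}E\left[|\check{\omega}_{\ell g}|^m\mid\mathbf{X}=\mathbf{x}\right]^{1/m}.
$$
By Lyapunov's (Jensen's) inequality this is at most $\prod_{\ell\in B}E[|\check{\omega}_{\ell g}|^L\mid\mathbf{X}=\mathbf{x}]^{1/L}$, which is bounded uniformly in $g$. For even $L$ we have $|\check\omega|^L=\check\omega^L$, so the stated moment bound applies directly. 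For odd $L$ we read the hypothesis as a bound on the absolute $L$th moment, which is the natural interpretation. Consequently each term in the sum is bounded by a constant $C$ that does not depend on $G$ or on the tuple.

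It remains to count tuples. A partition with $b$ blocks is realized by at most $G^b$ tuples, and the number of partitions of $\{1,\ldots,L\}$ is a finite constant since $L\le 8$. When every block has size at least two, $b\le\lfloor L/2\rfloor$. The expectation is therefore $O(G^{\lfloor L/2\rfloor-L})$. For even $L$ this is $O(G^{-L/2})$, and for odd $L$ it is $G^{(L-1)/2-L}=G^{-(L+1)/2}$, which are exactly the two cases of the statement.

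The main obstacle is the uniform moment bound for mixed blocks, since the hypothesis controls only the $L$th moment of each $\check{\omega}_{\ell g}$ separately. The H\"older–Lyapunov step above resolves this, with care needed only for the absolute-value interpretation when $L$ is odd. The restriction $L\le 8$ plays no role beyond keeping the combinatorial constant finite.
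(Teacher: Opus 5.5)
Your proof is correct. The paper gives no argument of its own for this lemma and simply imports it from \cite{djogbenou2019asymptotic}, so your partition-counting argument is a self-contained replacement for the citation rather than a competing route. Its three steps are: tuples with a singleton index vanish; generalized H\"older plus Lyapunov bounds every surviving term by a constant; and at most a constant times $G^{\lfloor L/2\rfloor}$ tuples survive. The citation buys brevity, while your argument makes the role of the hypotheses explicit, which matters in two places. First, as you observe, nothing depends on $L\le 8$. This is not idle: the paper itself invokes the lemma with more factors than that. For example, the proof of Lemma~\ref{lem:cumulants_for_tilde_t} bounds $G^{2}E[W_1^4W_{j_1}\cdots W_{j_k}\mid\mathbf{X}=\mathbf{x}]$ with $k$ up to eight, i.e.\ $L$ up to twelve, and your proof covers such cases directly given bounded $L$th absolute moments. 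Second, your absolute-moment reading for odd $L$ is not merely natural but necessary. For $L=3$, let $\check\omega_{1g}=\check\omega_{2g}=X_g$, where $X_g=\pm M_g$ with probability $M_g^{-2}$ each and $X_g=0$ otherwise. Let $\check\omega_{3g}$ equal $M_g$ on $\{|X_g|=M_g\}$, equal $-M_g$ on an event of probability $2M_g^{-2}$ inside $\{X_g=0\}$, and equal $0$ elsewhere, with independence across $g$. All three variables then have mean zero and zero signed third moments, yet $E[X_g^2\check\omega_{3g}]=2M_g$. Taking $M_g=g+1$, the expectation in the lemma equals $G^{-3}\sum_{g=1}^G 2(g+1)$, which is of order $G^{-1}$ rather than $G^{-2}$. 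So the lemma as literally stated fails for odd $L$, and the bound on $E[|\check\omega_{\ell g}|^L\mid\mathbf{X}=\mathbf{x}]$ that you assume is exactly what is needed for the single-block term.
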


\begin{lem}\label{lem:cumulants_for_tilde_t}
\begin{align*}
E[\tilde{t}\mid\mathbf{X}=\mathbf{x}]
&=
G^{-1/2}\nu_1+o(G^{-1}),
\\
E[\tilde{t}^2\mid\mathbf{X}=\mathbf{x}]
&=
1+G^{-1}\nu_2+o(G^{-1}),
\\
E[\tilde{t}^3\mid\mathbf{X}=\mathbf{x}]
&=
G^{-1/2}\nu_3+o(G^{-1}),\mbox{ and }
\\
E[\tilde{t}^4\mid\mathbf{X}=\mathbf{x}]
&=
3+G^{-1}\nu_4+o(G^{-1}).
\end{align*}
\end{lem}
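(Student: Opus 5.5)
The plan is a direct moment calculation. Write $\tilde t=\sqrt{G}W_1(1+A)$ with
$$
A=\tfrac12 W_2'\Gamma W_2-\tfrac12 W_3+\tfrac38 W_3^2 .
$$
Each $W_\ell$ is a centered sample mean of independent summands: $E[\omega_{1g}\mid\mathbf X=\mathbf x]=0$, $E[\omega_{2g}\mid\mathbf X=\mathbf x]=0$ and $E[\omega_{3g}\mid\mathbf X=\mathbf x]=0$. So every term in the expansion of $E[\tilde t^j\mid\mathbf X=\mathbf x]$, for $j=1,\dots,4$, has the form $G^{j/2}$ times the expectation of a product of $L$ such sample means. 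Lemma \ref{lem:A.3DNM} bounds these by $O(G^{-L/2})$ or $O(G^{-(L+1)/2})$. The bounded 16th moment and Assumption \ref{assumption:moments_and_invertibility} supply the moment bounds it needs, since $L\le 8$ factors appear.

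The first step is to discard everything of order $o(G^{-1})$. In $\tilde t^j$, a term with $L$ factors carries $G^{j/2}$, so it is negligible once $L\ge j+3$, or once $L=j+2$ with $j+2$ odd. The surviving terms are then:
\begin{itemize}
\item for $j=1$: $\sqrt{G}\,E[W_1]=0$, plus $\sqrt{G}\,E[W_1(\tfrac12W_2'\Gamma W_2-\tfrac12W_3)]$;
\item for $j=2$: $G\,E[W_1^2]=1$, plus $G\,E[W_1^2(W_2'\Gamma W_2-W_3+\tfrac34W_3^2+\tfrac14W_3^2)]$, with the remaining pieces negligible;
\item for $j=3$: $G^{3/2}E[W_1^3]$ and $G^{3/2}E[W_1^3(\cdot)]$;
\item for $j=4$: $G^2E[W_1^4]$, together with the cross terms $E[W_1^4W_3]$, $E[W_1^4W_3^2]$ and $E[W_1^4W_2'\Gamma W_2]$.
\end{itemize}

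Next, evaluate each surviving moment by the standard independent-sum formulas, keeping exact leading terms. For example, $E[W_1W_3]=G^{-1}\mu_{1,3}$ and $E[W_1^3]=G^{-2}\mu_{1,1,1}$. Also
$$
E[W_1^4]=3G^{-2}+G^{-3}(\mu_{1,1,1,1}-3\theta'),
$$
where the cross-sum correction $\frac1G\sum_g(\sigma_g^2/\sigma^2)^2$ appears. Mixed terms such as $E[W_1^2W_3]=G^{-2}\mu_{1,1,3}$ and $E[W_1^2W_3^2]=G^{-2}\mu_{3,3}+2G^{-2}\mu_{1,3}^2+O(G^{-3})$ follow the same pattern, and the $W_2'\Gamma W_2$ analogues produce $\mu_{2,2}$ and $\mu_{1,2}'\Gamma\mu_{1,2}$. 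Then Lemma \ref{lem:mu_omega_3_disappear} replaces $\mu_{1,3}$, $\mu_{1,1,3}$ and $\mu_{3,3}$ by $\mu_{1,1,1}$ and $\mu_{1,1,1,1}-\theta$.

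The main obstacle is checking that the $\theta$ terms cancel. Conceptually this is why $\nu_2$ and $\nu_4$ contain no $\theta$: the coefficients $-\tfrac12$ and $\tfrac38$ are those of $(1+x)^{-1/2}$. I will verify this by tracking, in $\nu_2$, the contribution $-\mu_{1,1,3}+\mu_{3,3}$ plus the $\theta$ from $E[W_1^2]E[W_3^2]$-type pairings, and confirming that the total reduces to $2\mu_{1,1,1}^2+\mu_{2,2}+2\mu_{1,2}'\Gamma\mu_{1,2}$. For $\nu_4$, I will do the same with the fourth-order pairings, which yield the coefficients $-2$, $28$, $6$ and $24$. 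This combinatorial bookkeeping of pairings is where errors are most likely, and I would cross-check it by specializing to the i.i.d. scalar-mean case ($\Gamma$-terms zero, $\theta=1$), where the known Hall (1983) cumulants of the studentized mean must be recovered.
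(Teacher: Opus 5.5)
Your plan is essentially the paper's proof: expand $\tilde t^{\,j}$, discard terms via Lemma~\ref{lem:A.3DNM}, compute the surviving pairings, and substitute Lemma~\ref{lem:mu_omega_3_disappear}. On the $\theta$ bookkeeping, no cross-pairing $\theta$ survives at order $G^{-1}$ in $\nu_2$, where the cancellation is just $-\mu_{1,1,3}+\mu_{3,3}=0$, while in $\nu_4$ the $-3\theta$ from $G^2E[W_1^4]$ cancels against $+12\theta-9\theta$. Two side remarks need fixing: $\tilde t^{\,4}$ contains products of up to $12$ sample means, so you need Lemma~\ref{lem:A.3DNM} for general fixed $L$ plus the all-moments hypothesis of Theorem~\ref{thm:theorem3} (this lemma sits in its proof), not just $16$th moments; and in your scalar-mean check the $\Gamma$-terms do not vanish, since there $W_2'\Gamma W_2=W_1^2$ and $\mu_{2,2}=\mu_{1,2}'\Gamma\mu_{1,2}=1$, which supply the $3$ in $\nu_2=2\mu_{1,1,1}^2+3$ and hence Hall's second cumulant $\tfrac74\mu_{1,1,1}^2+3$.
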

\begin{proof}
First, we are going to show the first moment.
Note that
\begin{align*}
G^{1/2}E\left[W_1\mid\mathbf{X}=\mathbf{x}\right]
&=
0
\end{align*}
and that
\begin{align*}
G^{1/2}E\left[W_1W_3\mid\mathbf{X}=\mathbf{x}\right]
&=
G^{-3/2}\sum_{g_1,g_2}
E\left[
\omega_{1g_1}
\omega_{3g_2}\mid\mathbf{X}=\mathbf{x}
\right]
\\
&=
G^{-1/2}\frac{1}{G}\sum_{g}
E\left[
\omega_{1g}
\omega_{3g}\mid\mathbf{X}=\mathbf{x}
\right]
\\
&=
G^{-1/2}\mu_{1,3},
\end{align*}
where the second equality holds because $\omega_{1g}$ and $\omega_{3g}$ are mean zero. By Lemma \ref{lem:A.3DNM},
$$
G^{1/2}E\left[W_1W_{j_1}'W_{j_2}\mid\mathbf{X}=\mathbf{x}
\right]
=
o(G^{-1})
$$
for $j_1,j_2=1,2,3$.
Since
$$
\tilde{t}=
\sqrt{G}W_1\left(1+\frac{1}{2}W_{2}'\Gamma W_{2}-\frac{1}{2}W_3+\frac{3}{8}W_3^2\right),
$$
we have
$$
E[\tilde{t}\mid\mathbf{X}=\mathbf{x}]=-G^{-1/2}\frac{\mu_{1,3}}{2}+o(G^{-1}).
$$
By Lemma \ref{lem:mu_omega_3_disappear}, we have the first equation of this lemma.

Second, we are going to show the second moment. Note that $GE\left[W_1^2\mid\mathbf{X}=\mathbf{x}\right]=1$, and that
\begin{align*}
GE\left[W_1^2W_3\mid\mathbf{X}=\mathbf{x}\right]
&=
G^{-2}\sum_{g_1,g_2,g_3}
E\left[
\omega_{1g_1}
\omega_{1g_2}
\omega_{3g_3}\mid\mathbf{X}=\mathbf{x}
\right]
\\
&=
G^{-1}\frac{1}{G}\sum_{g}
E\left[
\omega_{1g}^2
\omega_{3g}\mid\mathbf{X}=\mathbf{x}
\right]
\\
&=
G^{-1}\mu_{1,1,3},
\end{align*}
and that
\begin{align*}
&
GE\left[
W_1^2
W_{j_1}'W_{j_2}\mid\mathbf{X}=\mathbf{x}
\right]
\\
&=
G^{-3}\sum_{g_1,g_2,g_3,g_4}
E\left[
\omega_{1g_1}
\omega_{1g_2}
\omega_{j_1g_3}'
\omega_{j_2g_4}\mid\mathbf{X}=\mathbf{x}
\right]
\\
&=
G^{-3}\sum_{g_1}
E\left[
\omega_{1g_1}^2
\omega_{j_1g_1}'
\omega_{j_2g_1}\mid\mathbf{X}=\mathbf{x}
\right]
\\
&+
G^{-3}\sum_{g_1\ne g_2}
\left(E\left[
\omega_{1g_2}^2\mid\mathbf{X}=\mathbf{x}
\right]
E\left[
\omega_{j_1g_1}'
\omega_{j_2g_1}\mid\mathbf{X}=\mathbf{x}\right]
+2E\left[
\omega_{1g_1}\omega_{j_1g_1}'\mid\mathbf{X}=\mathbf{x}
\right]E\left[
\omega_{1g_2}\omega_{j_2g_2}\mid\mathbf{X}=\mathbf{x}
\right]
\right)
\\
&=
G^{-1}\left(\frac{1}{G}\sum_{g_1}E\left[
\omega_{j_1g_1}'
\omega_{j_2g_1}\mid\mathbf{X}=\mathbf{x}\right]
+2\frac{1}{G}\sum_{g_1}E\left[
\omega_{1g_1}\omega_{j_1g_1}'\mid\mathbf{X}=\mathbf{x}
\right]\frac{1}{G}\sum_{g_2}E\left[
\omega_{1g_2}
\omega_{j_2g_2}\mid\mathbf{X}=\mathbf{x}
\right]\right)
\\&\quad+
o(G^{-1})
\\
&=
G^{-1}\left(\mu_{j_1,j_2}
+2\mu_{1,j_1}'\mu_{1,j_2}\right)
+
o(G^{-1})
\end{align*}
for $j_1,j_2=1,2,3$.
By Lemma \ref{lem:A.3DNM},
$$
GE\left[
W_1^2
W_{j_1}
\cdots
W_{j_k}\mid\mathbf{X}=\mathbf{x}
\right]
=
o(G^{-1})
\mbox{ for }k\geq 3
$$
where $W_{j_1},\ldots,W_{j_k}$ are elements of $(W_1,W_2',W_3)'$.
Since
\begin{align*}
\tilde{t}^2
&=
GW_1^2\left(1+\frac{1}{2}W_{2}'\Gamma W_{2}-\frac{1}{2}W_3+\frac{3}{8}W_3^2\right)^2
\\
&=
GW_1^2\left(1-W_3+W_3^2+W_{2}'\Gamma W_{2}+(\mbox{products of three or four terms of }W_2,W_3)\right),
\end{align*}
we have
$$
E[\tilde{t}^2\mid\mathbf{X}=\mathbf{x}]=1+G^{-1}\left(-\mu_{1,1,3}+(\mu_{3,3}
+2\mu_{1,3}^2)
+
(\mu_{2,2}
+2\mu_{1,2}'\Gamma\mu_{1,2})
\right)+o(G^{-1}).
$$
By Lemma \ref{lem:mu_omega_3_disappear}, we have the second equation of this lemma.

Third, we are going to show the third moment. Note that
\begin{align*}
G^{3/2}E\left[
W_1^3\mid\mathbf{X}=\mathbf{x}
\right]
&=
G^{-3/2}
\sum_{g_1,g_2,g_3}E\left[
\omega_{1g_1}
\omega_{1g_2}
\omega_{1g_3}\mid\mathbf{X}=\mathbf{x}
\right]
\\
&=
G^{-3/2}
\sum_{g_1}E\left[
\omega_{1g_1}^3\mid\mathbf{X}=\mathbf{x}
\right]
\\
&=
G^{-1/2}\mu_{1,1,1},
\end{align*}
and that
\begin{align*}
&
G^{3/2}E\left[
W_1^3W_3\mid\mathbf{X}=\mathbf{x}
\right]
\\
&=
G^{-5/2}\sum_{g_1,g_2,g_3,g_4}
E\left[
\omega_{1g_1}
\omega_{1g_2}
\omega_{1g_3}
\omega_{3g_4}\mid\mathbf{X}=\mathbf{x}
\right]
\\
&=
G^{-5/2}\sum_{g_1}
E\left[
\omega_{1g_1}^3
\omega_{3g_1}\mid\mathbf{X}=\mathbf{x}
\right]
\\
&+
G^{-5/2}\sum_{g_1\ne g_2}
\left(E\left[
\omega_{1g_2}^2\mid\mathbf{X}=\mathbf{x}
\right]
E\left[
\omega_{1g_1}
\omega_{3g_1}\mid\mathbf{X}=\mathbf{x}\right]
+2E\left[
\omega_{1g_1}^2\mid\mathbf{X}=\mathbf{x}
\right]E\left[
\omega_{1g_2}\omega_{3g_2}\mid\mathbf{X}=\mathbf{x}
\right]
\right)
\\
&=
3G^{-1/2}
\frac{1}{G}\sum_{g_1}
E\left[
\omega_{1g_1}
\omega_{3g_1}\mid\mathbf{X}=\mathbf{x}\right]
+
o(G^{-1})
\\
&=
3G^{-1/2}\mu_{1,3}
+
o(G^{-1}).
\end{align*}
By Lemma \ref{lem:A.3DNM},
$$
G^{3/2}E\left[
W_1^3
W_{j_1}
\cdots
W_{j_k}\mid\mathbf{X}=\mathbf{x}
\right]
=
o(G^{-1})\mbox{ for }k\geq 2.
$$
Since
\begin{align*}
\tilde{t}^3
&=
G^{3/2}W_1^3\left(1+\frac{1}{2}W_{2}'\Gamma W_{2}-\frac{1}{2}W_3+\frac{3}{8}W_3^2\right)^3
\\
&=
G^{3/2}W_1^3\left(1-\frac{3}{2}W_3+(\mbox{products of two to six terms of }W_2,W_3)\right),
\end{align*}
we have
$$
E[\tilde{t}^3\mid\mathbf{X}=\mathbf{x}]=G^{-1/2}(\mu_{1,1,1}-\frac{9}{2}\mu_{1,3})+o(G^{-1}).
$$
By Lemma \ref{lem:mu_omega_3_disappear}, we have the third equation of this lemma.

Fourth, we are going to show the fourth moment. Note that
\begin{align*}
&
G^2E\left[
W_1^4\mid\mathbf{X}=\mathbf{x}
\right]
\\
&=
G^{-2}\sum_{g_1,g_2,g_3,g_4}
E\left[
\omega_{1g_1}
\omega_{1g_2}
\omega_{1g_3}
\omega_{1g_4}\mid\mathbf{X}=\mathbf{x}
\right]
\\
&=
G^{-2}\sum_{g_1}
E\left[
\omega_{1g_1}^4\mid\mathbf{X}=\mathbf{x}
\right]
+
3\left(\frac{1}{G}\sum_{g_1}
E\left[
\omega_{1g_1}^2\mid\mathbf{X}=\mathbf{x}
\right]
\right)
\left(
G^{-1}
\sum_{g_2\ne g_1}
E\left[
\omega_{1g_2}^2\mid\mathbf{X}=\mathbf{x}
\right]
\right)
\\
&=
3+
G^{-1}\frac{1}{G}\sum_{g_1}\left(
E\left[
\omega_{1g_1}^4\mid\mathbf{X}=\mathbf{x}
\right]
-
3
E\left[
\omega_{1g_1}^2\mid\mathbf{X}=\mathbf{x}
\right]^2\right)
\\
&=
3+G^{-1}(\mu_{1,1,1,1}-3\theta),
\end{align*}
that
\begin{align*}
&
G^2E\left[
W_1^4W_3\mid\mathbf{X}=\mathbf{x}
\right]
\\
&=
G^{-3}\sum_{g_1,g_2,g_3,g_4,g_5}
E\left[
\omega_{3g_1}
\omega_{1g_2}
\omega_{1g_3}
\omega_{1g_4}
\omega_{1g_5}\mid\mathbf{X}=\mathbf{x}
\right]
\\
&=
G^{-3}\sum_{g_1,g_2}
\left(
6E\left[
\omega_{3g_1}
\omega_{1g_1}^2\mid\mathbf{X}=\mathbf{x}
\right]
E\left[
\omega_{1g_2}^2\mid\mathbf{X}=\mathbf{x}
\right]
+4
E\left[
\omega_{3g_1}
\omega_{1g_1}\mid\mathbf{X}=\mathbf{x}
\right]
E\left[
\omega_{1g_2}^3\mid\mathbf{X}=\mathbf{x}
\right]
\right)
\\&+G^{-3}\sum_{g_1}
E\left[
\omega_{3g_1}
\omega_{1g_1}
\omega_{1g_1}
\omega_{1g_1}
\omega_{1g_1}\mid\mathbf{X}=\mathbf{x}
\right]
+o(G^{-1})
\\
&=
G^{-1}
\left(6\mu_{1,1,3}
+
4\mu_{1,3}
\mu_{1,1,1}\right)
+o(G^{-1}),
\end{align*}
and that
\begin{align*}
&
G^2E\left[
W_1^4W_{j_1}'W_{j_2}\mid\mathbf{X}=\mathbf{x}
\right]
\\
&=
G^{-4}\sum_{g_1,g_2,g_3,g_4,g_5,g_6}
E\left[
\omega_{j_1g_1}'
\omega_{j_2g_2}
\omega_{1g_3}
\omega_{1g_4}
\omega_{1g_5}
\omega_{1g_6}\mid\mathbf{X}=\mathbf{x}
\right]
\\
&=
G^{-4}\sum_{g_1,g_2,g_3}
\left(
3E\left[
\omega_{j_1g_1}'
\omega_{j_2g_1}
\omega_{1g_2}^2
\omega_{1g_3}^2\mid\mathbf{X}=\mathbf{x}
\right]
+
12E\left[
\omega_{j_1g_1}'
\omega_{j_2g_2}
\omega_{1g_1}
\omega_{1g_2}
\omega_{1g_3}^2\mid\mathbf{X}=\mathbf{x}
\right]
\right)
\\&+
G^{-4}\sum_{g_1,g_2}
\left(4E\left[
\omega_{j_1g_1}'
\omega_{j_2g_1}
\omega_{1g_1}
\omega_{1g_2}^3\mid\mathbf{X}=\mathbf{x}
\right]
+6
E\left[
\omega_{j_1g_1}'
\omega_{j_2g_2}
\omega_{1g_1}^2
\omega_{1g_2}^2\mid\mathbf{X}=\mathbf{x}
\right]
\right)
\\&+
G^{-4}\sum_{g_1}
E\left[
\omega_{j_1g_1}'
\omega_{j_2g_1}
\omega_{1g_1}
\omega_{1g_1}
\omega_{1g_1}
\omega_{1g_1}\mid\mathbf{X}=\mathbf{x}
\right]
+o(G^{-1})
\\
&=
G^{-4}\sum_{g_1,g_2,g_3}
\left(
3E\left[
\omega_{j_1g_1}'
\omega_{j_2g_1}
\omega_{1g_2}^2
\omega_{1g_3}^2\mid\mathbf{X}=\mathbf{x}
\right]
+
12E\left[
\omega_{j_1g_1}'
\omega_{j_2g_2}
\omega_{1g_1}
\omega_{1g_2}
\omega_{1g_3}^2\mid\mathbf{X}=\mathbf{x}
\right]
\right)
+o(G^{-1})
\\
&=
G^{-1}
\left(
3\mu_{j_1,j_2}
+
12\mu_{1,j_1}'\mu_{1,j_2}
\right)
+o(G^{-1}).
\end{align*}
By Lemma \ref{lem:A.3DNM},
$$
G^{2}
E\left[
W_1^4
W_{j_1}
\cdots
W_{j_k}\mid\mathbf{X}=\mathbf{x}
\right]
=
o(G^{-1})
\mbox{ for }k\geq 3.
$$
Since
\begin{align*}
\tilde{t}^4
&=
G^2W_1^4\left(1+\frac{1}{2}W_{2}'\Gamma W_{2}-\frac{1}{2}W_3+\frac{3}{8}W_3^2\right)^4
\\
&=
G^2W_1^4\left(1-2W_3+2W_{2}'\Gamma W_{2}+3W_3^2+(\mbox{products of three to eight terms of }W_2,W_3)\right),
\end{align*}
we have
\begin{align*}
&E[\tilde{t}^4\mid\mathbf{X}=\mathbf{x}]-3
\\&=G^{-1}((\mu_{1,1,1,1}-3\theta)-2(6\mu_{1,1,3}
+4
\mu_{1,3}
\mu_{1,1,1})+2\left(
3\mu_{2,2}
+
12\mu_{1,2}'\Gamma\mu_{1,2}
\right)+3\left(
3\mu_{3,3}
+
12\mu_{1,3}'\mu_{1,3}
\right))\\&\quad+o(G^{-1}).
\end{align*}
By Lemma \ref{lem:mu_omega_3_disappear}, we have the fourth equation of this lemma.
\end{proof}

\begin{lem}\label{lem:F_moments_t_tilde_moments}
$\int z^jd\mathbb{F}(z)=E[\tilde{t}^j\mid\mathbf{X}=\mathbf{x}]+o(G^{-1})$ for every positive integer $j$.
\end{lem}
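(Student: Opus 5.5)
The plan is to match the first four moments and then argue that higher moments are determined by cumulants. Both sides are expansions of Edgeworth type in powers of $G^{-1/2}$ up to order $G^{-1}$.

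\textbf{Moments of $\mathbb{F}$.} First I would compute the moments of $\mathbb{F}$ directly. Write $d\mathbb{F}(z)=\bigl(\phi(z)+G^{-1/2}(q_1\phi)'(z)+G^{-1}(q_2\phi)'(z)\bigr)dz$. Then integrate $z^j$ by parts against $(He_r\phi)'=-He_{r+1}\phi$, using orthogonality of the Hermite polynomials. This shows that $\mathbb{F}$ is exactly the formal Edgeworth distribution whose cumulants are
\[
\kappa_1=G^{-1/2}k_1,\quad \kappa_2=1+G^{-1}k_2,\quad \kappa_3=G^{-1/2}k_3,\quad \kappa_4=G^{-1}k_4,
\]
with higher cumulants zero. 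The cross terms $k_1^2$, $k_1k_3$ and $k_3^2$ in $q_2$ are exactly what the exponential of the cumulant series produces at order $G^{-1}$. Concretely, I would expand $\int e^{sz}d\mathbb{F}(z)=e^{s^2/2}\bigl(1+G^{-1/2}(k_1s+k_3s^3/6)+G^{-1}((k_2+k_1^2)s^2/2+(k_4+4k_1k_3)s^4/24+k_3^2s^6/72)\bigr)$. This is just the Hermite identity $\int He_r(z)\phi(z)e^{sz}dz=s^re^{s^2/2}$. Comparing it with $\exp\bigl(\sum_r\kappa_r s^r/r!\bigr)$ to order $G^{-1}$ confirms the claim. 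Converting these cumulants to raw moments up to $j=4$ gives $\nu_1,\dots,\nu_4$ as defined; the definitions of $k_1,\dots,k_4$ in terms of $\nu$ are precisely the moment-to-cumulant formulas truncated at $o(G^{-1})$. So for $j\le 4$ the claim follows from Lemma \ref{lem:cumulants_for_tilde_t}.

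\textbf{Higher moments of $\tilde t$.} For $j\geq5$ I need the cumulants of $\tilde t$ of order $r\ge5$ to be $o(G^{-1})$. I also need the first four cumulants of $\tilde t$ to agree with $\kappa_1,\dots,\kappa_4$ up to $o(G^{-1})$, which is immediate from Lemma \ref{lem:cumulants_for_tilde_t}. Then the moment–cumulant relation, a polynomial identity, gives $E[\tilde t^j\mid\mathbf{X}=\mathbf{x}]=\int z^jd\mathbb{F}(z)+o(G^{-1})$ for all $j$.

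To bound these higher cumulants I would write $\tilde t=\sqrt G W_1+\sqrt G\,R$, where $R$ is a sum of products of at least two of the normalized means $W_1,W_2,W_3$. Then I would use the standard fact that a cumulant of order $r$ of sums of independent centered terms scales like $G^{1-r/2}$ for $\sqrt G W_1$. For products involving $R$, I would count connected index configurations via Lemma \ref{lem:A.3DNM}, extended to more factors using the assumption that all moments of $\eta_g$ are bounded. The extension itself is straightforward: a product of $L$ centered means has expectation $O(G^{-\lceil L/2\rceil})$. Each additional mean factor beyond the leading $W_1$ contributes an extra $G^{-1/2}$, while connectivity forces the cumulant of order $r$ to have order $G^{-(r-2)/2}$. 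For $r\ge5$ this is $O(G^{-3/2})$, which is $o(G^{-1})$.

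\textbf{Main obstacle.} The expected difficulty is this combinatorial cumulant bound for the polynomial statistic $\tilde t$, carried out in the non-identically distributed setting. An alternative route avoids cumulants: expand $E[\tilde t^j]$ directly as a sum of expectations of products of $W$'s. Only terms with at most two extra factors of $W_2,W_3$ survive at order $G^{-1}$, as in the proof of Lemma \ref{lem:cumulants_for_tilde_t}. The resulting index-pairing counts are the same as for Gaussian moments with corrections from third- and fourth-order average mixed moments, and these reproduce the moments of $\mathbb{F}$ term by term. I would try the cumulant route first, since it reduces the all-$j$ statement to checking $j\le4$.
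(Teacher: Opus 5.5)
Your proposal is correct and is essentially the paper's proof. Both arguments have the same structure: match the cumulants of orders $1$--$4$ of $\tilde t$ and $\mathbb{F}$ via Lemma~\ref{lem:cumulants_for_tilde_t} and the definitions of $q_1,q_2$, observe that the cumulants of $\mathbb{F}$ of order at least $5$ are $o(G^{-1})$, show the same for $\tilde t$, and conclude through the moment--cumulant relation.

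The only difference is how the higher cumulants of $\tilde t$ are handled. The paper cites \citet[Theorem 2.1]{hall1992bootstrap} for them, whereas you derive the $O(G^{-(r-2)/2})$ bound directly from the connected-partition count. Your derivation makes explicit that the bound holds for non-identically distributed clusters under the all-moments condition (iii) of Theorem~\ref{thm:theorem3}, rather than only the fourth moments of Assumption~\ref{assumption:moments_and_invertibility}(iii) that the paper's proof invokes.
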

\begin{proof}
By Lemma \ref{lem:cumulants_for_tilde_t} and the definitions of $q_1$ and $q_2$, the $j$-th cumulant of $\mathbb{F}$ differs only in magnitude of $o(G^{-1})$ from that of $\tilde{t}$ for $j\leq 4$.
Note that, by the definitions of $q_1$ and $q_2$, the $j$th cumulant of $\mathbb{F}$ is $o(G^{-1})$ for $j\geq 5$. Since all the moments are bounded (Assumption \ref{assumption:moments_and_invertibility} (iii)), by \citet[Theorem 2.1]{hall1992bootstrap}, the $j$th cumulant of $\tilde{t}$ is $o(G^{-1})$  for $j\geq 5$.
\end{proof}

\begin{lem}\label{lem:bound_derivative_character}
$E[\tilde{t}^j\mid\mathbf{X}=\mathbf{x}]-\int H(u;G^{-1/2},\gamma_3)^j\xi(u;G^{-1/2},\gamma_2)du=o(G^{-1})$ for every positive integer $j$.
\end{lem}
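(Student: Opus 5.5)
Since $\tilde t=H(S_G;G^{-1/2},\gamma_3)$ with $S_G=\mathbf{V}_G^{-1/2}G^{-1/2}\sum_{g=1}^G\eta_g$ and $H(u;s,\gamma_3)^j$ is a polynomial in $u$, the claim compares the expectation of a fixed polynomial under the law of $S_G$ with its integral against the formal Edgeworth density $\xi(u;G^{-1/2},\gamma_2)$. The plan is to avoid any smoothness argument and do the comparison moment by moment. We expand $H(u;G^{-1/2},\gamma_3)^j$ into a finite sum of monomials $G^{-m/2}c_{m,\kappa}u^\kappa$ (multi-index $\kappa$). The coefficients $c_{m,\kappa}$ are polynomials in $\gamma_3$, which are bounded uniformly in $G$ by Assumption \ref{assumption:moments_and_invertibility} and the bounded eigenvalues of $\mathbf{V}_G$. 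It then suffices to show, for every fixed multi-index $\kappa$,
$$
E[S_G^\kappa\mid\mathbf{X}=\mathbf{x}]-\int u^\kappa\xi(u;G^{-1/2},\gamma_2)du=o(G^{-1}),
$$
because the factors $G^{-m/2}$ with $m\ge0$ only help.

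The moment on the left is a polynomial in the cumulants of $S_G$. Independence across $g$ gives that the $r$-th joint cumulant of $S_G$ equals $G^{-(r-2)/2}$ times the average $r$-th cumulant of $\mathbf{V}_G^{-1/2}\eta_g$. The first two cumulants are $0$ and $I$. The third and fourth are $G^{-1/2}$ and $G^{-1}$ times the average cumulants collected in $\gamma_2$. All cumulants of order $r\ge5$ are $O(G^{-3/2})$, because all moments of $\eta_g$ are bounded uniformly in $g$ (condition (iii) of Theorem \ref{thm:theorem3}). By construction of the polynomials $\tilde P_r$ in \citet[Section 7]{bhattacharya2010normal}, the signed measure $\sum_{r=0}^2 s^r\tilde P_r(-D;\gamma_2)\phi_I$ has a formal cumulant generating function equal to the Taylor polynomial $\frac12\|t\|^2+s\,\chi_3(t)+s^2\chi_4(t)$ truncated at order $s^2$. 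Consequently its moments $\int u^\kappa\xi\,du$ coincide with the moments of the cumulant expansion of $S_G$ truncated after the $G^{-1}$ term.

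We therefore write $\log E[\exp(i t'S_G)]$ as a formal power series in $G^{-1/2}$ and take the $\kappa$-th derivative at $t=0$. Terms of order up to $G^{-1}$ match exactly. The remaining terms involve either a cumulant of order at least $5$ or products giving total order $G^{-3/2}$, and each is a finite sum with bounded coefficients. The moment–cumulant relation is a finite polynomial identity for each fixed $\kappa$, so the difference is $O(G^{-3/2})=o(G^{-1})$.

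The main obstacle is making the matching statement rigorous: $\int u^\kappa(\tilde P_r(-D;\gamma_2)\phi_I)(u)\,du$ must equal the coefficient of $s^r$ in the expansion of the moment in terms of cumulants, uniformly in $G$. We would argue this through Fourier transforms. The $\kappa$-th moment of $\tilde P_r(-D)\phi_I$ is $(-i)^{|\kappa|}\partial^\kappa$ at $0$ of $\tilde P_r(it)e^{-\|t\|^2/2}$, and $\tilde P_r(it)$ is by definition the $s^r$ coefficient of $\exp(\sum_{\nu\ge3}s^{\nu-2}\chi_\nu(it)/\nu!)$. The identity then reduces to comparing coefficients of the power series, with error governed by the bounds $\chi_\nu=O(1)$ from condition (iii). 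We would also note that Lemma \ref{lem:A.3DNM} gives an alternative direct route for the $O(G^{-3/2})$ remainder when $|\kappa|\le8$; the cumulant route handles general $j$.
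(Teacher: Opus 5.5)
Your argument is correct, but it replaces the paper's key step with a different one. The paper also starts from the fact that moments of the polynomial $H(u;G^{-1/2},\gamma_3)^j$ are derivatives of the characteristic function at the origin. It then simply cites Theorem 9.9 of Bhattacharya and Rao and evaluates its bound at $t=0$. That theorem controls derivatives of the difference between the characteristic function of $S_G$ and the Fourier transform $\sum_r G^{-r/2}\tilde P_r(it)e^{-\|t\|^2/2}$ of its Edgeworth approximation, uniformly over a ball of radius of order $\sqrt{G}$.

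You instead prove the $t=0$ statement exactly, via the moment--cumulant relation and $\chi_\nu(S_G)=G^{-(\nu-2)/2}\bar\chi_\nu$ from independence. The step you flag as the main obstacle is in fact harmless. For $r\ge 1$, every monomial of $\tilde P_r(it)$ has degree at least $r+2$ in $t$, so only $r\le|\kappa|-2$ contribute to $\partial^\kappa[\,\cdot\,]_{t=0}$. Hence $E[S_G^\kappa]$ is a finite polynomial in $G^{-1/2}$. Its $r\le 2$ part is exactly $\int u^\kappa\xi(u;G^{-1/2},\gamma_2)\,du$, and the rest is $O(G^{-3/2})$ with coefficients polynomial in $\bar\chi_3,\ldots,\bar\chi_{|\kappa|}$.

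What your route buys is a self-contained proof with an explicit $O(G^{-3/2})$ rate that visibly uses only moments up to order $3j$, the degree of $H^j$. Two small fixes make it airtight. First, cite condition (i) of Theorem \ref{thm:theorem3} as well as (iii), since bounded cumulants of $\mathbf{V}_G^{-1/2}\eta_g$ need $\mathbf{V}_G^{-1/2}$ bounded. Second, note that $\eta_g$ must be centered, as the paper tacitly assumes, for $S_G$ to have first cumulant zero.

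The paper's citation is shorter, but it invokes a uniform-in-$t$ result far stronger than what is used at the origin. That theorem also ties both the admissible derivative order and the length of the expansion to the number of moments used. Covering every $j$ therefore implicitly requires taking that number large and then discarding the terms beyond $r=2$, which is the truncation you carry out by hand.

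Your side remark about Lemma \ref{lem:A.3DNM} is not needed and does not by itself give the comparison with $\xi$, so you can drop it.
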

\begin{proof}
Since $H(u;G^{-1/2},\gamma_3)^j$ is a polynomial of $u$, its moments are expressed as the derivatives of the characteristic function at the origin. \citet[Theorem 9.9]{bhattacharya2010normal} implies the statement of this lemma.
\end{proof}

\begin{lem}\label{lem:lemma_t_expansion_tilde_t}
$\int (\tilde{H}(u;G^{-1/2},\gamma_1)^j-H(u;G^{-1/2},\gamma_3)^j)\xi(u;G^{-1/2},\gamma_2)du=o(G^{-1})$ for every $j$.
\end{lem}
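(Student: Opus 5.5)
We compare $\tilde H(u;s,\gamma_1)^j=u_1^j(1+s\tilde h_1(u)+s^2\tilde h_2(u))^{-j/2}$ with $H(u;s,\gamma_3)^j=u_1^j(1+sh_1(u)+s^2h_2(u))^j$, where $s=G^{-1/2}$. The plan is a Taylor expansion in $s$. Write $A(u)=s\tilde h_1(u)+s^2\tilde h_2(u)$. The map $x\mapsto(1+x)^{-1/2}$ has second-order expansion $1-\tfrac12x+\tfrac38x^2$. By construction of $\tilde t$ from Lemma \ref{lemma:variace_quadratic}, with $W_3$ and $-W_2'\Gamma W_2$ playing the roles of $s\tilde h_1$ and $s^2\tilde h_2$ after normalization, the polynomial $1+sh_1+s^2h_2$ coincides with $1-\tfrac12A+\tfrac38A^2$ up to terms of order $s^3\|u\|^3$ and higher. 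Hence, on the region $\{\|u\|\le G^{\delta}\}$ with small $\delta>0$ (so that $|A(u)|\le CG^{-1/2+2\delta}\le 1/2$), we obtain
$$
\bigl|\tilde H(u;s,\gamma_1)^j-H(u;s,\gamma_3)^j\bigr|\le C_j\,s^{3}(1+\|u\|)^{m_j}
$$
for some integer $m_j$. This uses the Lagrange remainder of $(1+x)^{-j/2}$ together with the binomial expansion of $(1+sh_1+s^2h_2)^j$: both agree in their $s^0,s^1,s^2$ coefficients (as polynomials in $u$), and the remainders carry a factor $s^3$ times a polynomial in $u$.

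Next we integrate this bound against $\xi(u;s,\gamma_2)$. The function $\xi$ is $\phi_I$ times a polynomial whose coefficients are average cumulants up to fourth order. These coefficients are uniformly bounded by the bounded moments in Theorem \ref{thm:theorem3}(iii) and the eigenvalue bound (i), which keeps $\mathbf V_G^{-1/2}$ bounded. Therefore $\int(1+\|u\|)^{m}|\xi(u;s,\gamma_2)|du\le C$ uniformly in $G$, and the contribution of the inner region is $O(G^{-3/2})=o(G^{-1})$.

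On the outer region $\{\|u\|>G^{\delta}\}$, $|\xi|$ is dominated by a polynomial times $\exp(-\|u\|^2/2)$, so every polynomial integrates to $O(\exp(-cG^{2\delta}))$ there. That handles $H^j$. For $\tilde H^j$ the main obstacle is the set where $1+s\tilde h_1+s^2\tilde h_2\le 0$ or is near zero, where $\tilde H$ is undefined or unbounded. I would handle this as follows. The actual $t$ has $1+s\tilde h_1+s^2\tilde h_2=\hat\sigma^2/\sigma^2$, a sum of a quadratic form and nonnegative pieces. I would therefore redefine (or truncate) $\tilde H$ on the exceptional set, say by setting it to zero where $\hat\sigma^2/\sigma^2<1/2$. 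This is harmless for Lemma \ref{lem:BRtheorem20.6-1}: the probability of that event is $o(G^{-1})$ by Lemma \ref{lemma:fourth_moment_markov} applied to $W_2$ and $W_3$. After this modification, $|\tilde H^j|\le C\|u\|^j$, so its outer integral is also exponentially small. If $\tilde H$ cannot be modified in this way, the fallback is to note that the exceptional set lies in $\{\|u\|\ge cG^{1/4}\}$, because $1+s\tilde h_1+s^2\tilde h_2\ge 1-Cs\|u\|-Cs^2\|u\|^2$, and to control $\tilde H^j$ on it through the Gaussian factor and the boundedness of $u_1/\sqrt{\cdot}$ on the truncated domain.

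Combining the inner bound $O(G^{-3/2})$ with the exponentially small outer contributions gives the claimed $o(G^{-1})$ for every $j$.
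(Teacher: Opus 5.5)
Your proposal is correct and follows essentially the same route as the paper: a second-order Taylor expansion of $x\mapsto(1+x)^{-1/2}$ shows that $\tilde H$ and $H$ differ by $G^{-3/2}$ times polynomials in $u$, and these integrate to $O(G^{-3/2})$ against the Gaussian-dominated $\xi(u;G^{-1/2},\gamma_2)$. Several of your steps make rigorous what the paper leaves implicit in its ``sufficiently small'' caveat: the inner/outer split at $\|u\|=G^{\delta}$, the direct bound on $\tilde H^j-H^j$ (the paper instead displays $(\tilde H-H)^j$ and appeals to $\int\tilde H^j\xi\,du=O(1)$), the use of $|\xi|$, and the truncation of $\tilde H$ where the radicand falls below $1/2$; only your untruncated fallback should be dropped, since $\tilde H$ is undefined where the radicand is negative and $|\tilde H|^j$ is not integrable near its zero set for $j\ge 2$.
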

\begin{proof}
Under $H_0$, we have
$$
t-\tilde{t}=-\sqrt{G}W_{1}\left(
1+\frac{1}{2}W_{2}'\Gamma W_{2}-\frac{1}{2}W_{3}+\frac{3}{8}W_{3}^2-(1-W_{2}'\Gamma W_{2}+W_3)^{-1/2}
\right).
$$
By the second-order Taylor expansion of $u\mapsto (1+u)^{-1/2}$, we have
\begin{align*}
&|(1-W_{2}'\Gamma W_{2}+W_3)^{-1/2}
-(1-\frac{1}{2}(-W_{2}'\Gamma W_{2}+W_3)+\frac{3}{8}(-W_{2}'\Gamma W_{2}+W_3)^2)|\\&\leq \frac{1}{3!}\left(\frac{d^3}{du^3}(1+u)^{-1/2}|_{u=\bar{u}}\right)|-W_{2}'\Gamma W_{2}+W_3|^3\\&=\frac{15}{48}(1+\bar{u})^{-7/2}|-W_{2}'\Gamma W_{2}+W_3|^3
\end{align*}
for some point $\bar{u}$ between $0$ and $-W_{2}'\Gamma W_{2}+W_3$.
Therefore,
$$
|t-\tilde{t}|\leq\sqrt{G}|W_{1}|\left(
\frac{3}{8}|(W_{2}'\Gamma W_{2})^2-2(W_{2}'\Gamma W_{2})W_3|+
|-W_{2}'\Gamma W_{2}+W_3|^3
\right)
$$
when $|-W_{2}'\Gamma W_{2}+W_3|$ is sufficiently small.
Therefore, we have
$$
|\tilde{H}(u;G^{-1/2},\gamma_1)-H(u;G^{-1/2},\gamma_3)|
\leq G^{-3/2}\sum_{r=4}^7G^{-(r-4)/2}|\tilde{g}_r(u)|
$$
where $\tilde{g}_r(u)$ is an $r$th order polynomial.
As a result, we have
\begin{align*}
&\int (\tilde{H}(u;G^{-1/2},\gamma_1)-H(u;G^{-1/2},\gamma_3))^j\xi(u;G^{-1/2},\gamma_2)du
\\
&\leq
\int(G^{-3/2}\sum_{r=4}^7G^{-(r-4)/2}|\tilde{g}_r(u)|)^j\xi(u;G^{-1/2},\gamma_2)du
\\
&=
O(G^{-3/2}).
\end{align*}
Since $\int \tilde{H}(u;G^{-1/2},\gamma_1)^j\xi(u;G^{-1/2},\gamma_2)du=O(1)$ for every $j$, the statement of this lemma follows.
\end{proof}

\newpage
\section{Additional Tables}\label{sec:addendix_sim_tables}

\begin{table}[!htbp]
\centering
\caption{Monte Carlo rejection probabilities in Design 1 (BDM)}
\label{tab:bdm-en-21-g-10-200-alpha-0p05-nmc-10000-nboot-1000}
 
\begin{tabular}{@{}rrrrrr@{}}
\toprule
\multicolumn{1}{c}{$G$} & \multicolumn{1}{c}{Normal} & \multicolumn{1}{c}{Student $d_1$} & \multicolumn{1}{c}{Pairs} & \multicolumn{1}{c}{WCB} & \multicolumn{1}{c}{Analytical} \\
\midrule
10  & 0.114 & 0.051 & 0.009 & 0.042 & 0.046 \\
25  & 0.074 & 0.052 & 0.042 & 0.053 & 0.049 \\
50  & 0.061 & 0.051 & 0.050 & 0.053 & 0.050 \\
75  & 0.056 & 0.049 & 0.050 & 0.051 & 0.048 \\
100 & 0.053 & 0.048 & 0.048 & 0.049 & 0.047 \\
200 & 0.051 & 0.050 & 0.051 & 0.051 & 0.050 \\
\bottomrule
\end{tabular}
\end{table}
 \begin{table}[!htbp]
\centering
\caption{Actual critical values in Design 1 (BDM)}
\label{tab:bdm-en-21-g-10-200-alpha-0p05-nmc-10000-nboot-1000-critical-values}
\begingroup
 
\setlength{\tabcolsep}{3.5pt}
\begin{tabular}{@{}rrrrrrrrr@{}}
\toprule
\multicolumn{1}{c}{$G$} & \multicolumn{1}{c}{Student $d_1$} & \multicolumn{1}{c}{Student $d_2$} & \multicolumn{1}{c}{Pairs} & \multicolumn{1}{c}{WCB} & \multicolumn{1}{c}{Analytical} & \multicolumn{1}{c}{Cornish-Fisher} & \multicolumn{1}{c}{Actual} \\
\midrule
10  & 2.508 & 2.385 & 3.580 & 2.566 & 2.510 & 2.374 & 2.517 \\
25  & 2.148 & 2.106 & 2.232 & 2.142 & 2.155 & 2.133 & 2.165 \\
50  & 2.050 & 2.030 & 2.066 & 2.042 & 2.053 & 2.047 & 2.064 \\
75  & 2.019 & 2.006 & 2.023 & 2.012 & 2.021 & 2.015 & 2.009 \\
100 & 2.004 & 1.994 & 2.003 & 1.997 & 2.005 & 2.003 & 1.981 \\
200 & 1.982 & 1.977 & 1.978 & 1.975 & 1.982 & 1.981 & 1.977 \\
\bottomrule
\end{tabular}
\endgroup
\end{table}
 \newpage
\begin{table}[!htbp]
\centering
\caption{Monte Carlo rejection probabilities in Design 2 (exponential)}
\label{tab:dmn-en-1-g-10-200-alpha-0p05-nmc-10000-nboot-1000}
 
\begin{tabular}{@{}rrrrrr@{}}
\toprule
\multicolumn{1}{c}{$G$} & \multicolumn{1}{c}{Normal} & \multicolumn{1}{c}{Student $d_1$} & \multicolumn{1}{c}{Pairs} & \multicolumn{1}{c}{WCB} & \multicolumn{1}{c}{Analytical} \\
\midrule
10  & 0.140 & 0.098 & 0.068 & 0.094 & 0.089 \\
25  & 0.097 & 0.078 & 0.061 & 0.079 & 0.066 \\
50  & 0.072 & 0.064 & 0.054 & 0.066 & 0.055 \\
75  & 0.069 & 0.065 & 0.056 & 0.067 & 0.056 \\
100 & 0.064 & 0.060 & 0.056 & 0.062 & 0.054 \\
200 & 0.056 & 0.055 & 0.051 & 0.056 & 0.050 \\
\bottomrule
\end{tabular}
\end{table}
 \begin{table}[!htbp]
\centering
\caption{Actual critical values in Design 2 (exponential)}
\label{tab:dmn-en-1-g-10-200-alpha-0p05-nmc-10000-nboot-1000-critical-values}
\begingroup
 
\setlength{\tabcolsep}{3.5pt}
\begin{tabular}{@{}rrrrrrrrr@{}}
\toprule
\multicolumn{1}{c}{$G$} & \multicolumn{1}{c}{Student $d_1$} & \multicolumn{1}{c}{Student $d_2$} & \multicolumn{1}{c}{Pairs} & \multicolumn{1}{c}{WCB} & \multicolumn{1}{c}{Analytical} & \multicolumn{1}{c}{Cornish-Fisher} & \multicolumn{1}{c}{Actual} \\
\midrule
10  & 2.385 & 2.385 & 2.848 & 2.341 & 2.479 & 3.097 & 3.153 \\
25  & 2.106 & 2.106 & 2.315 & 2.070 & 2.234 & 2.411 & 2.426 \\
50  & 2.030 & 2.030 & 2.147 & 2.006 & 2.121 & 2.180 & 2.192 \\
75  & 2.006 & 2.006 & 2.090 & 1.987 & 2.076 & 2.108 & 2.144 \\
100 & 1.994 & 1.994 & 2.057 & 1.979 & 2.050 & 2.068 & 2.091 \\
200 & 1.977 & 1.977 & 2.008 & 1.967 & 2.008 & 2.014 & 2.017 \\
\bottomrule
\end{tabular}
\endgroup
\end{table}
 \newpage
\begin{table}[!htbp]
\centering
\caption{Monte Carlo rejection probabilities in Design 3 (binary regressor)}
\label{tab:generic-en-1-g-10-200-alpha-0p05-nmc-10000-nboot-500}
 
\begin{tabular}{@{}rrrrrr@{}}
\toprule
\multicolumn{1}{c}{$G$} & \multicolumn{1}{c}{Normal} & \multicolumn{1}{c}{Student $d_1$} & \multicolumn{1}{c}{Pairs} & \multicolumn{1}{c}{WCB} & \multicolumn{1}{c}{Analytical} \\
\midrule
10  & 0.172 & 0.110 & 0.042 & 0.097 & 0.104 \\
25  & 0.099 & 0.079 & 0.058 & 0.079 & 0.068 \\
50  & 0.073 & 0.064 & 0.054 & 0.065 & 0.056 \\
75  & 0.067 & 0.060 & 0.054 & 0.063 & 0.054 \\
100 & 0.067 & 0.062 & 0.055 & 0.063 & 0.055 \\
200 & 0.056 & 0.054 & 0.051 & 0.056 & 0.049 \\
\bottomrule
\end{tabular} 
\end{table}
 \begin{table}[!htbp]
\centering
\caption{Actual critical values in Design 3 (binary regressor)}
\label{tab:generic-en-1-g-10-200-alpha-0p05-nmc-10000-nboot-1000-critical-values}
\begingroup
 
\setlength{\tabcolsep}{3.5pt}
\begin{tabular}{@{}rrrrrrrrr@{}}
\toprule
\multicolumn{1}{c}{$G$} & \multicolumn{1}{c}{Student $d_1$} & \multicolumn{1}{c}{Student $d_2$} & \multicolumn{1}{c}{Pairs} & \multicolumn{1}{c}{WCB} & \multicolumn{1}{c}{Analytical} & \multicolumn{1}{c}{Cornish-Fisher} & \multicolumn{1}{c}{Actual} \\
\midrule
10  & 2.529 & 2.385 & 4.549 & 2.542 & 2.630 & 3.185 & 3.655 \\
25  & 2.152 & 2.106 & 2.436 & 2.119 & 2.272 & 2.439 & 2.491 \\
50  & 2.051 & 2.030 & 2.184 & 2.028 & 2.139 & 2.195 & 2.227 \\
75  & 2.020 & 2.006 & 2.109 & 2.002 & 2.088 & 2.123 & 2.128 \\
100 & 2.004 & 1.994 & 2.071 & 1.991 & 2.059 & 2.075 & 2.092 \\
200 & 1.982 & 1.977 & 2.014 & 1.972 & 2.013 & 2.020 & 2.037 \\
\bottomrule
\end{tabular}
\endgroup 
\end{table}
 \newpage
 \begin{table}[!htbp]
\centering
\caption{Monte Carlo rejection probabilities in Design 4 (uneven clusters)}
\label{tab:genericfe-en-2-g-10-200-alpha-0p05-nmc-10000-nboot-1000}
 
\begin{tabular}{@{}rrrrrr@{}}
\toprule
\multicolumn{1}{c}{$G$} & \multicolumn{1}{c}{Normal} & \multicolumn{1}{c}{Student $d_1$} & \multicolumn{1}{c}{Pairs} & \multicolumn{1}{c}{WCB} & \multicolumn{1}{c}{Analytical} \\
\midrule
10  & 0.154 & 0.105 & 0.036 & 0.042 & 0.079 \\
25  & 0.085 & 0.069 & 0.044 & 0.056 & 0.052 \\
50  & 0.074 & 0.065 & 0.053 & 0.060 & 0.055 \\
75  & 0.066 & 0.061 & 0.052 & 0.058 & 0.052 \\
100 & 0.059 & 0.056 & 0.050 & 0.055 & 0.050 \\
200 & 0.054 & 0.051 & 0.050 & 0.051 & 0.048 \\
\bottomrule
\end{tabular}
 
\end{table}
 \begin{table}[!htbp]
\centering
\caption{Actual critical values in Design 4 (uneven clusters)}
\label{tab:genericfe-en-2-g-10-200-alpha-0p05-nmc-10000-nboot-1000-critical-values}
\begingroup
 
\setlength{\tabcolsep}{3.5pt}
\begin{tabular}{@{}rrrrrrrrr@{}}
\toprule
\multicolumn{1}{c}{$G$} & \multicolumn{1}{c}{Student $d_1$} & \multicolumn{1}{c}{Student $d_2$} & \multicolumn{1}{c}{Student $d_3$} & \multicolumn{1}{c}{Pairs} & \multicolumn{1}{c}{WCB} & \multicolumn{1}{c}{Analytical} & \multicolumn{1}{c}{Cornish-Fisher} & \multicolumn{1}{c}{Actual} \\
\midrule
10  & 2.385 & 2.385 & 2.778 & 4.085 & 2.932 & 2.655 & 2.983 & 3.218 \\
25  & 2.106 & 2.106 & 2.436 & 2.403 & 2.236 & 2.275 & 2.368 & 2.314 \\
50  & 2.030 & 2.030 & 2.348 & 2.164 & 2.078 & 2.131 & 2.176 & 2.194 \\
75  & 2.006 & 2.006 & 2.319 & 2.096 & 2.034 & 2.080 & 2.096 & 2.096 \\
100 & 1.994 & 1.994 & 2.305 & 2.058 & 2.013 & 2.052 & 2.065 & 2.056 \\
200 & 1.977 & 1.977 & 2.285 & 2.009 & 1.983 & 2.009 & 2.012 & 1.982 \\
\bottomrule
\end{tabular}
\endgroup 
\end{table}
  
\clearpage
\bibliographystyle{econ-econometrica}
\bibliography{main}

\end{document}